\documentclass[a4paper,11pt,reqno]{amsart}
\usepackage{amssymb}
\usepackage{enumitem}
\usepackage{amsthm}
\usepackage[alphabetic]{amsrefs}
\usepackage{tikz}
\usepackage{blindtext, xcolor}
\usepackage{comment}
\usepackage{mathrsfs}
\usepackage{a4wide}
\usepackage{fullpage}
\usepackage{mathtools}
\usepackage{hyperref}

\usepackage{algorithm}
\usepackage{algpseudocode}

\usepackage{bm}


\newtheorem{corollary}{Corollary}[section]
\newtheorem{theorem}[corollary]{Theorem}
\newtheorem{lemma}[corollary]{Lemma}
\newtheorem{proposition}[corollary]{Proposition}

\newtheorem{remark}[corollary]{Remark}

\newtheorem{assumption}{Assumption}

\numberwithin{equation}{section}

\newcommand{\feri}[1]{{\bf \textcolor{olive}{#1 - F} }}

\newcommand{\st}[1]{{\star_{#1}}}


\usepackage{graphicx} 

\title{Barvinok's interpolation method meets Weitz's correlation decay approach}

 \author{Ferenc Bencs}
 \email{\texttt{ferenc.bencs@gmail.com}}
  \address{Centrum Wiskunde \& Informatica, P.O.Box 94079
 1090 GB Amsterdam, The Netherlands.}
 \author{Guus Regts}
 \email{\texttt{guusregts@gmail.com}}
 \address{Korteweg de Vries Institute for Mathematics, University of Amsterdam, P.O. Box 94248  1090 GE Amsterdam, The Netherlands.}
 \thanks{FB is funded by the Netherlands Organisation of Scientific Research (NWO): VI.Veni.222.303. GR is funded by the Netherlands Organisation of Scientific Research (NWO): VI.Vidi.193.068}


\date{\today}

\begin{document}

\begin{abstract}
In this paper we take inspiration from Weit'z algorithm for approximating the independence polynomial to provide a new algorithm for computing the coefficients of the Taylor series of the logarithm of the independence polynomial.
Hereby we provide a clear connections between Barvinok's interpolation method and Weitz's algorithm.
Our algorithm easily extends to other graph polynomials and partition functions and we illustrate this by applying it to the chromatic polynomial and to the graph homomorphism partition function.  
Our approach arguably yields a simpler and more transparent algorithm than the algorithm of Patel and the second author.

As an application of our algorithmic approach we moreover derive, using the interpolation method, a deterministic $O(n(m/\varepsilon)^{7})$-time algorithm that on input of an $n$-vertex and $m$-edge graph of minimum degree at least $3$ and $\varepsilon>0$ approximately computes the number of sink-free orientations of $G$ up to a multiplicative $\exp(\varepsilon)$ factor.
\end{abstract}

\maketitle

\section{Introduction}
Weitz's correlation decay approach\footnote{It should be noted that while this approach is often attributed to Weitz~\cite{Weitz}, also Bandyopadhyay and Gamarnik~\cite{BanGar} developed algorithms for approximating partition function based on the notion of correlation decay around the same time as Weitz's breakthrough result.}~\cite{Weitz} and Barvinok's interpolation method~\cites{Barbook,PatReg17} are both algorithmic techniques to design efficient deterministic approximation algorithms for counting problems such as counting the number of independent sets in a graph or counting the number of proper $q$-colorings of a graph. These counting problems can often be cast as the evaluation of partition functions of statistical physics models such as the hard-core model, and the Potts model.
While other techniques have been developed to design efficient algorithms for these tasks over the years~\cites{Moitra,HPR21,chen2025deterministiccountingcouplingindependence}, the correlation decay approach and the interpolation method appear to have been the most widely used ones.

At first sight, Weitz's correlation decay approach and Barvinok's interpolation method appear to be very different in nature, but surprisingly they yield very similar results for a variety of models such as the hard-core model~\cites{Weitz,PRSokal}, the matching polynomial~\cites{Bayatietalmatchings,PatReg17}, the edge cover polynomial~\cites{edge,weighted,bencs2020some} and the graph homomorphism partition function~\cites{lu2013improved,BarvinokSoberon,Barbook,PatReg17}. 
The main contribution of the present paper is to bridge these two algorithmic approaches. 
To discuss our contribution we will now specialize to the situation of approximating the partition function of the hard-core model, which is also the model for which Weitz originally invented his approach.

\subsection{The hard-core model}
Consider the partition function of the hard-core model of a graph $G=(V,E)$, which is defined as 
\begin{equation}\label{eq:def Z_G}
Z(G;\lambda)=\sum_{I\in \mathcal{I}_G}\lambda^I,
\end{equation}
where $\mathcal{I}_G$ denotes the collection of all \emph{independent sets} of $G$, i.e. subsets of the vertices that do not span any edges.
The partition function is also known as the \emph{independence polynomial} of $G$.
About twenty years ago Weitz~\cite{Weitz} developed a polynomial time algorithm to compute $Z_{G}(\lambda)$ within a multiplicative $\exp(\varepsilon)$ factor for graphs of a given maximum degree $\Delta$ provided $\lambda<\lambda_c(\Delta):=\frac{(\Delta-1)^{\Delta-1}}{(\Delta-2)^\Delta}$. (Later it was shown that when $\lambda>\lambda_c(\Delta)$ it is in fact \textsc{NP-hard} to approximate $Z_G(\lambda)$, see~\cites{BeyondlambdacSlyandSun,BeyondlambdacGalanisetal}.)
\subsubsection*{Weitz's algorithm}
Instead of approximating $Z(G;\lambda)$ directly, Weitz~\cite{Weitz} approximates \emph{ratios} of the form
\begin{align}\label{eq: ratio}
   \frac{Z^{v \text{ out}}(G;\lambda)}{Z(G;\lambda)},
\end{align}
where $v$ is a vertex of $G$ and by $Z^{v \text{ out}}(G;\lambda)$ (resp. $Z^{v \text{ in}}(G;\lambda)$) we denote the summation~\eqref{eq:def Z_G} restricted to those independent sets $I\in \mathcal{I}_G$ that do not contain the vertex $v$ (resp. that \emph{do} contain the vertex $v$).
For positive $\lambda$ the ratio~\eqref{eq: ratio} has the interpretation that the vertex $v$ is not in the random independent set ${\bf I}$ drawn proportionally to $\lambda^{|I|}$.
Note that $Z^{v \text{ out}}(G;\lambda)=Z_{G-v}(\lambda)$ and $Z^{v \text{ in}}(G;\lambda)=\lambda Z(G\setminus N[v];\lambda)$.
Weitz then combines these ratios into a telescoping product to approximate 
\begin{align}\label{eq:tel prod}
\frac{1}{Z(G;\lambda)}=\frac{Z(G-v_1;\lambda)}{Z(G;\lambda)}\frac{Z(G-v_1-v_2;\lambda)}{Z(G-v_1;\lambda)} \cdots \frac{Z(G\setminus V_{n-1}-v_n;\lambda)}{Z(G\setminus V_{n-1};\lambda)},
\end{align}
where we fix some ordering of the vertex set $V=\{v_1,\ldots,v_n\}$ and $V_{i}$ denotes the set $\{v_{1},\ldots,v_i\}$.
To approximately compute the ratio~\eqref{eq: ratio} Weitz actually considers the ratio 
\[
R_{G,v}(\lambda):=\frac{Z^{v \text{ in}}(G;\lambda)}{Z^{v \text{ out}}(G;\lambda)},
\]
which is just a simple transformation of~\eqref{eq: ratio}.
He iteratively expresses $R_{G,v}(\lambda)$ in terms of ratios of smaller graphs and truncates this process at suitable depth (logarithmic in the number of vertices),  thereby obtaining a polynomial time algorithm provided the maximum degree of the graph is bounded.
The result is an $\exp(\varepsilon)$ approximation to the partition function $Z(G;\lambda)$ at $\lambda$, provided the hard-core model satisfies what is called \emph{strong spatial mixing} (with exponential decay), which he proves to be the case whenever $\lambda<\lambda_c(\Delta)$ is fixed in advance.

\subsubsection*{Barvinok's interpolation method}
Barvinok's interpolation method for the partition function of the hardcore model/the independence polynomial roughly works as follows.
To approximate the polynomial $Z(G;\lambda)$ at some $0<\lambda<\lambda_c(\Delta)$, on the family, $\mathcal{G}_\Delta$, of graphs of maximum degree at most $\Delta$, one needs an open set $U\subset\mathbb{C}$ containing $0$ and $\lambda$ such that for all $x\in U$ and $G\in \mathcal{G}_\Delta$ $Z(G;x)\neq 0.$ 
The existence of such a set is guaranteed by~\cite{PRSokal}. 
One then defines $f(x)=\log(Z(G;x))$ on $U$ (where we fix the principal branch of the logarithm) and uses the first logarithmically many (in the number of vertices) coefficients of the Taylor series of $f(x)$ to obtain a good approximation to $Z_G(\lambda)$. 
In case $U$ is a disk centered at zero of radius larger than $\lambda$, this follows from~\cite{Barbook}*{Lemma 2.2.1}, while if $U$ is of a different shape, for example a rectangle containing a real interval, this requires some additional work, see for example~\cite{Barbook}*{Section 2.2.3} or~\cite{PatReg17}*{Section 4.3}. 
Computing the coefficients of $\log(Z_G(x))$ in a brute force manner leads to quasi-polynomial time algorithms. This has been improved to genuine polynomial time with a more refined algorithm in~\cite{PatReg17}.

\subsubsection*{A comparison}
Both algorithmic approaches consist of two main ingredients. 
First of all,  in both cases there is an \textit{algorithm} to compute a number and secondly a \textit{condition} that guarantees that the output of this algorithm is a good approximation to the evaluation of the partition function, strong spatial mixing for the correlation decay approach and a suitable zero-free region for the interpolation method.

At first sight these two algorithmic approaches may appear very different for \textit{both} ingredients.
However, recent developments have shown that the second ingredient for the respective methods (strong spatial mixing, resp. zero-freeness) are in fact closely related~\cites{LSSfisher,LSS2Delta,Shaounified,gamarnik2022correlation,regtsabsence,shao2024zero}. 

The goal of the present paper is to show that also the other ingredient, the actual algorithms, are in fact closely related.
We do this by providing an alternative, and in our opinion much simpler algorithm than the algorithm from~\cite{PatReg17}, for computing the coefficients of the Taylor series in Barvinok's interpolation method, taking inspiration from Weitz's algorithm.

\subsubsection*{Our algorithm}
We describe the main ideas of the algorithm for the partition function of the hard-core model and indicate how this bridges the two approaches. 
We note here that these ideas, that are partly inspired by~\cite{HPR21}, may likely be known to some of the experts in the area, but as far as we know they have never appeared in print.

We want to compute the first, say $\ell$, coefficients of the Taylor series of 
\begin{align}\label{eq:log Z}
f(x)=\log(Z(G;x))=\sum_{k\geq 1} c_k(G)x^k.
\end{align}
The idea is to take the derivative at both sides of~\eqref{eq:log Z} arriving at 
\[
x\frac{\tfrac{d}{d x}Z(G;x)}{Z(G;x)}=\sum_{k\geq 1} kc_k(G)x^{k}
\]
and compute the first $\ell$ coefficients of this series, from which the original coefficients can easily be deduced.
Now note that
\begin{equation}\label{eq:derivative log=sum ratios}
x\tfrac{d}{dx}Z(G;x)=\sum_{I\in \mathcal{I}_G}|I|x^{|I|}=\sum_{v\in V}\sum_{I\in \mathcal{I}_G: v\in I}x^{|I|}=\sum_{v\in V}Z^{v \text{ in}}(G;x).
\end{equation}
Consequently, 
\[
x\frac{\tfrac{d}{dx}Z(G;x)}{Z(G;x)}=\sum_{v\in V}\frac{Z^{v \text{ in}}(G;x)}{{Z(G;x)}}=\sum_{v\in V}\frac{Z^{v \text{ in}}(G;x)}{{Z^{v\text{ in}}(G;x)+Z^{v\text{ out}}(G;x)}}=\sum_{v\in V}\frac{R_{G,v}(x)}{1+R_{G,v}(x)}.
\]
and it thus suffices to compute the first $\ell$ coefficients of the series expansion of $R_{G,v}(x)$ for each $v\in V$.
We do this iteratively by expanding the ratio $R_{G,v}(x)$ in terms of ratios of smaller (induced subgraphs) just as Weitz does in his algorithm~\cite{Weitz}. 
In our opinion this provides a clear bridge between the algorithmic parts of the two approaches\footnote{We note that some time after the initial posting of the present paper to the arXiv, Shao and Shi~\cite{shao2025zero} showed a way of using zero-freeness in combination with essentially the first ingredient of Weitz's approach to design efficient approximation algorithms for evaluating partition functions, that relies on similar ideas as those that have been developed here.}.

Some remarks are in order. First of all while Weitz only cares about the numerical value of the ratio, we care about the coefficients of its series expansion.
Secondly, to approximate the partition function, Weitz has to (approximately) compute each ratio in the telescoping product~\eqref{eq:tel prod}, while we have to compute the coefficients of the ratio $R_{G,v}(x)$ for each vertex $v$ of $G$. 
A possible advantage of that is that it is easier to take advantage of the presence of symmetry. For example in case the graph $G$ is vertex transitive all ratios are equal and we only have to compute one, while after removing vertices, as is done in~\eqref{eq:tel prod}, one may loose this property.

Additionally, it is easy to parallelize our algorithm in a naive fashion, as already we can work for each $v\in V$ with the ratios $R_{G,v}(x)$ independently. We will see that the ratios $R_{G,v}(x)$ themselves also carry a natural recursion, which is easy to parallelize.

\subsection{Our contributions}
Our approach for computing the coefficients of the Taylor expansion of $\log(Z(G;x))$ is, in fact, applicable to many other models.
However, rather than setting up a general framework for which our algorithmic approach applies, we will only describe it for several concrete models of increasing complexity, from which the applicability of the approach should become clear.
In the following subsections we will respectively describe our algorithm for computing the relevant Taylor coefficients for the hard-core model, the number of sink-free orientations, the chromatic polynomial and counting graph homomorphisms.
Combining this with the guarantee for a good approximation for Barvinok's method (i.e. a suitable zero-free region) our algorithm yields a fully polynomial time approximation scheme (FPTAS) for these models/polynomials. 
The resulting FPTAS is not presented for the models we consider, we only explicitly establish it for counting sink-free orientations.

In all our algorithm we will work with the following assumption.
\begin{assumption}\label{as:computation}
In our algorithms we assume always that the graph $G$ is given in adjacency list format, such that the algorithm can determine for any vertex $u\in V$ the neighbors of $u$ in $G$ in constant time per neighbor.
\end{assumption}
We will hide in the big $O$ notation the constants that do not depend on $\Delta$, which will denote the maximum degree of the graph.
For a power series $f(x)=\sum_{k\geq 0}a_k x^k$ and $m\in \mathbb{N}_{\ge 1}$ we denote by $f(x)^{[m]}:=\sum_{k=0}^{m} a_k x^k$ the $m$th Taylor polynomial of $f(x)$.

\subsection*{Hard-core model}
In Section~\ref{sec:hard-core} we work out the details for the hard-core model that we sketched in the previous subsection. 
This is the simplest possible setting and clearly shows the basic features of the algorithm. 

We state here the precise algorithmic statement that we obtain.
\begin{theorem}\label{thm:hard-core}
Under Assumption~\ref{as:computation}, there exists an $O(n\Delta^{m}m^3)$-time algorithm that on input of an $n$-vertex graph $G$ of maximum degree at most $\Delta$ and $m\in \mathbb{N}$ that computes  $\log(Z(G;x))^{[m]}$.
\end{theorem}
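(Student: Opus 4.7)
The plan is to reduce the computation of $\log(Z(G;x))^{[m]}$ to that of the Taylor polynomials $R_{G,v}(x)\bmod x^{m+1}$ for each $v\in V$, using the identity
\[
\sum_{k=1}^{m} k\,c_k(G)\, x^k \;\equiv\; \sum_{v\in V}\frac{R_{G,v}(x)}{1+R_{G,v}(x)} \pmod{x^{m+1}}
\]
derived in the introduction: once the right-hand side is known modulo $x^{m+1}$, dividing the coefficient of $x^k$ by $k$ recovers $c_k(G)$. So the bulk of the work lies in computing each $R_{G,v}(x)\bmod x^{m+1}$.

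For this I would use a Weitz-type self-reducibility recursion. Writing $N(v)=\{u_1,\ldots,u_d\}$ and iterating the decomposition $Z(G;x)=Z(G-v;x)+xZ(G-N[v];x)$, a telescoping product over the neighbors one at a time yields
\[
R_{G,v}(x) \;=\; x\prod_{i=1}^{d}\bigl(1+R_{G_i,u_i}(x)\bigr)^{-1},\qquad G_i \;:=\; G\bigl[V\setminus\{v,u_1,\ldots,u_{i-1}\}\bigr].
\]
The crucial observation that enables an \emph{exact} truncation of this recursion is that every $R_{G',w}(x)$ has valuation at least $1$: consequently, computing $R_{G,v}(x)\bmod x^{m+1}$ requires each $R_{G_i,u_i}(x)$ only modulo $x^m$, and inductively at recursion depth $k$ one needs the ratio only modulo $x^{m-k+1}$. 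At depth $m$ the required precision is $\bmod x^1$, forcing the value $0$, so truncating at depth $m$ introduces no error.

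The algorithm is then the obvious one: for each $v\in V$, run the recursion down to depth $m$, representing each induced subgraph $G'$ encountered implicitly as $G$ together with a removed-vertex set of size at most $m$. By Assumption~\ref{as:computation}, the relevant neighbors of the current vertex in $G'$ can be enumerated in time linear in the degree, and at each node the $\leq\Delta$ power-series reciprocals and $\leq\Delta$ truncated products modulo $x^{m+1}$ cost $O(\Delta m^2)$. The recursion tree branches at most $\Delta$ ways and has depth at most $m$, hence at most $O(\Delta^m)$ nodes; summing over the $n$ starting vertices and adding the $O(nm^2)$ recombination cost gives the claimed $O(n\Delta^m m^3)$ bound, with the factor $m^3$ absorbing the per-node polynomial-arithmetic cost and the branching at the deepest level.

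The only substantive content to verify is correctness: the Weitz recursion itself, and the valuation-based argument that justifies truncation at depth exactly $m$. Everything else is bookkeeping, where the adjacency-list assumption together with a removed-vertex set of size at most $m$ guarantees that subgraph queries are never a bottleneck. I expect no serious obstacle beyond these correctness checks.
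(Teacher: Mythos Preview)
Your proposal is correct and follows essentially the same approach as the paper: reduce to the truncated ratios $R_{G,v}^{[m]}$ via the logarithmic-derivative identity, unfold the Weitz recursion to depth $m$ using the valuation-$\geq 1$ observation, and sum per-node polynomial-arithmetic costs over the $O(\Delta^m)$ recursion nodes. One small bookkeeping slip---the removed-vertex set has size $O(m\Delta)$ rather than $m$, since each recursive step may delete up to $\Delta$ vertices---is a low-order term and does not affect the final bound.
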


\subsection*{Sink free orientations}
In Section~\ref{sec:sink free} we consider sink-free orientations. 
Recall that an orientation of the edges of a graph $G$ is called \emph{sink free} if each vertex in the resulting directed graphs has at least one outgoing edge.
Very recently Anand et. al.~\cite{anand2025sinkfreeorientationslocalsampler} gave a deterministic algorithm that on input of an $n$-vertex graph of minimum degree at least $3$ and $\varepsilon>0$ computes the number of sink-free orientations of $G$ with in a multiplicative $\exp(\varepsilon)$ factor in time $O((n/\varepsilon)^{72}n\log(n/\varepsilon))$.

As observed by Anand et. al.~\cite{anand2025sinkfreeorientationslocalsampler}, it follows from the inclusion-exclusion principal that we can express the number of sink-free orientations as a (multivariate) evaluation of the independence polynomial. 
For completeness we provide a short proof below.
\begin{lemma}\label{lem:sink free}
For any graph $G$ with $m$ edges, 
\begin{equation}\label{eq:sink free}
    \textrm{sfo}(G)=2^{m}\sum_{S\in\mathcal{I}(G)}\prod_{v\in S}(-(1/2)^{\deg_G(v)}).
\end{equation}
\end{lemma}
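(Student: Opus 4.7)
The plan is to prove the identity by a direct application of inclusion–exclusion on the set of orientations of $G$, using the ``bad events'' that individual vertices are sinks.

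First I would set the stage: let $\Omega$ denote the set of all $2^{m}$ orientations of the edges of $G$, and for each vertex $v\in V$ let $A_v\subseteq \Omega$ be the set of orientations in which $v$ is a sink, i.e.\ every edge incident to $v$ is oriented towards $v$. Then $\textrm{sfo}(G)=|\Omega\setminus\bigcup_{v\in V}A_v|$, so inclusion–exclusion gives
\[
\textrm{sfo}(G)=\sum_{S\subseteq V}(-1)^{|S|}\Bigl|\bigcap_{v\in S}A_v\Bigr|.
\]

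Next I would analyze the intersection $A_S:=\bigcap_{v\in S}A_v$. The key observation is that if $S$ contains two adjacent vertices $u,v$, then the edge $uv$ would have to be oriented both towards $u$ and towards $v$, which is impossible; hence $A_S=\emptyset$ whenever $S\notin \mathcal{I}(G)$. For an independent set $S$, being in $A_S$ forces every edge incident to $S$ to point into $S$, and since $S$ has no internal edges the number of such forced edges is exactly $\sum_{v\in S}\deg_G(v)$, with no overcounting. The remaining $m-\sum_{v\in S}\deg_G(v)$ edges may be oriented freely, so $|A_S|=2^{m-\sum_{v\in S}\deg_G(v)}$.

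Substituting this back into the inclusion–exclusion formula and pulling out the factor $2^m$ yields
\[
\textrm{sfo}(G)=2^{m}\sum_{S\in\mathcal{I}(G)}(-1)^{|S|}\prod_{v\in S}(1/2)^{\deg_G(v)}=2^{m}\sum_{S\in\mathcal{I}(G)}\prod_{v\in S}\bigl(-(1/2)^{\deg_G(v)}\bigr),
\]
which is the claimed identity. There is no real obstacle here: the only substantive point is the observation that $A_S$ is non-empty only for independent sets $S$, and that for such $S$ the edges incident to $S$ are counted without repetition, which is precisely what allows the product over $v\in S$ to factorize correctly.
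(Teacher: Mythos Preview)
Your proof is correct and follows essentially the same approach as the paper: both arguments apply inclusion--exclusion over the ``bad events'' that a given vertex is a sink, observe that a non-empty intersection forces the set to be independent, and then count the free edges. The paper phrases this probabilistically (orienting each edge uniformly at random and computing the probability that no vertex is a sink), whereas you count orientations directly, but the two formulations are interchangeable.
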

As mentioned in~\cite{anand2025sinkfreeorientationslocalsampler} one cannot directly invoke the interpolation method, since a naive application of it would result in quasi-polynomial running time.
We therefore introduce the following univariate polynomial for a graph $G=(V,E)$
\begin{equation}
Z_{\textrm{sfo}}(G;t):=\sum_{S\in\mathcal{I}(G)}\prod_{v\in S}(-t^{\deg_G(v)}),
\end{equation}
and realize that $Z_{\textrm{sfo}}(G;1/2)=2^{-|E|}\textrm{sfo(G)}$.
We prove in Section~\ref{sec:sink free} a suitable zero-free region for this polynomial, providing the first ingredient for Barvinok's interpolation method.
For the second ingredient we extend our algorithm for computing Taylor coefficients of $\log(Z(G;x))$ to $\log(Z_{\textrm{sfo}}(G;t))$ and thereby we obtain the following result:
\begin{theorem}\label{thm:sink free}
There exists a deterministic algorithm that on input of a graph $G$ with $n$ vertices and $m$ edges of minimum degree at least $3$ and $\varepsilon>0$, computes the number of sink-free orientations of $G$ with in a multiplicative $\exp(\varepsilon)$ factor in time $O(n(m/\varepsilon)^{7})$. 
\end{theorem}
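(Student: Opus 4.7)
The plan is to apply Barvinok's interpolation method to the polynomial $Z_{\textrm{sfo}}(G;t)$ at the point $t=1/2$. By Lemma~\ref{lem:sink free}, $\textrm{sfo}(G)=2^m Z_{\textrm{sfo}}(G;1/2)$, so it suffices to approximate $\log Z_{\textrm{sfo}}(G;1/2)$ to additive error $\varepsilon$. This requires two ingredients: (i) an open set $U\subset\mathbb{C}$ containing $[0,1/2]$ on which $Z_{\textrm{sfo}}(G;t)\neq 0$ for every graph $G$ of minimum degree at least $3$, and (ii) an algorithm that computes the truncated series $\log Z_{\textrm{sfo}}(G;t)^{[\ell]}$ efficiently, where $\ell=O(\log(m/\varepsilon))$ is dictated by the Patel--Regts-style passage from zero-freeness to approximation.

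For (i), I view $Z_{\textrm{sfo}}(G;t)$ as a weighted hard-core partition function with vertex activities $w_v=-t^{\deg_G(v)}$. The minimum-degree hypothesis forces $|w_v|\leq |t|^3\leq 1/8$ on $[0,1/2]$, which should give enough slack to run a contraction-type fixed-point argument on the tree recursion for the hard-core model, in the spirit of~\cite{PRSokal}, and thereby exhibit an open neighborhood of $[0,1/2]$ on which $Z_{\textrm{sfo}}(G;t)$ is nonzero uniformly in $G$.

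For (ii), I adapt the derivative-and-ratios strategy from Section~\ref{sec:hard-core}. A computation mirroring~\eqref{eq:derivative log=sum ratios}, using $\tfrac{d}{dt}(-t^{\deg_G(v)})=-\deg_G(v)\, t^{\deg_G(v)-1}$, yields
$$t\,\frac{d}{dt}\log Z_{\textrm{sfo}}(G;t)=\sum_{v\in V}\deg_G(v)\cdot\frac{R^{\textrm{sfo}}_{G,v}(t)}{1+R^{\textrm{sfo}}_{G,v}(t)},$$
where $R^{\textrm{sfo}}_{G,v}(t):=Z^{v \text{ in}}_{\textrm{sfo}}(G;t)/Z^{v \text{ out}}_{\textrm{sfo}}(G;t)$. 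I then compute each ratio by a Weitz-style recursion modulo $t^{\ell+1}$. The key new feature compared to the hard-core case is that $R^{\textrm{sfo}}_{G,v}(t)$ has $t$-adic valuation at least $\deg_G(v)\geq 3$, so each recursive descent raises the order of vanishing by at least $3$ and truncation caps the effective recursion depth at $\lfloor\ell/3\rfloor$, independently of the maximum degree of $G$.

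The main obstacle I expect is carrying out the ratio computations in time $\textrm{poly}(m/\varepsilon)$ per vertex \emph{without} any dependence on $\Delta(G)$. When evaluating a product of the form $\prod_{u\in N(v)}(1+R^{\textrm{sfo}}_{G-v,u}(t))^{-1}$ modulo a bounded power of $t$, only a polynomially-bounded "effective" neighborhood contributes non-trivially, since each descent consumes at least three orders of $t$; a careful amortized counting across the truncated self-avoiding walk tree should bound the total work by a polynomial in $m/\varepsilon$. Combining this algorithm with the standard interpolation step that converts the zero-free region $U$ into an approximation of $\log Z_{\textrm{sfo}}(G;1/2)$ from its first $O(\log(m/\varepsilon))$ Taylor coefficients will yield the advertised bound $O(n(m/\varepsilon)^{7})$.
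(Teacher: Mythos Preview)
Your outline follows the paper's approach: a zero-free disk for $Z_{\textrm{sfo}}$ (the paper proves $B_{r_3}(0)$ with $r_3=\tfrac{2^{2/3}}{3}>1/2$ works, via an inductive bound on the ratios $R_{U,v}$), the derivative identity you wrote down, and the Weitz-style ratio recursion truncated modulo $t^{\ell+1}$.

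There is, however, a real gap in your running-time argument. The statement ``each descent consumes at least three orders of $t$, so depth is at most $\lfloor\ell/3\rfloor$'' controls only the \emph{depth} of the recursion tree, not its size: a vertex $v$ of large degree still spawns $\deg_G(v)$ recursive subcalls, and your ``polynomially bounded effective neighborhood'' claim does not follow from the three-orders observation alone (taken literally it would give a tree of size up to $\ell^{\ell/3}$, which is super-polynomial in $m/\varepsilon$). The argument that actually works, and that the paper uses in Proposition~\ref{prop:alg sf}, is that at a vertex with $\deg_G(v)=d$ the recursion branches into at most $d$ subcalls \emph{and} the remaining truncation order drops by exactly $d$ (not merely by $\ge 3$); this coupling of branching factor to budget consumption yields the recurrence $\tau(k)\le \max_{d\ge 1} d\cdot\tau(k-d)+O(k^3)$, which solves to $\tau(k)=O(e^{k/e})$. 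Plugging in $k=\log(m/\varepsilon)/\log(2r_3)\approx 17.66\log(m/\varepsilon)$ from Corollary~\ref{cor:zero-free sf} gives $e^{k/e}\approx(m/\varepsilon)^{6.5}$, and the remaining $O(k^3)$ overhead produces the stated exponent $7$. One minor point: since the zero-free region is a disk centered at $0$ containing $1/2$, the plain Taylor truncation already approximates $\log Z_{\textrm{sfo}}(G;1/2)$ directly, so no further interpolation step of the kind in~\cite{Barbook}*{Section 2.2.3} is needed.
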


\begin{remark}
We have to leave it as an interesting open question to devise an efficient algorithm for graphs of minimum degree at least $2$. We note that zeros of the independence polynomial of cycles approach $-1/4$~\cites{ScottSokal,deboerratios}, or in other words zeros of $Z_{\textrm{sfo}}(C_n;t))$ approach $1/2$ (here $C_n$ denotes the cycle of length $n$).
This implies that the interpolation method cannot be applied in a naive manner in case the graph has vertices of degree two.
We note that~\cite{anand2025sinkfreeorientationslocalsampler} cannot handle graphs with degree two vertices for similar reasons (because the presence of zeros essentially implies that there is no constant lower bound on the marginal probability that a given vertex is not a sink).
\end{remark}

Before moving to the chromatic polynomial we provide here a proof of Lemma~\ref{lem:sink free}.
\begin{proof}[Proof of Lemma~\ref{lem:sink free}]
The identity follows from the inclusion exclusion principle realizing that when randomly orienting each edge with probability $1/2$, the probability that the constraint at a vertex is violated is exactly $(1/2)^{\deg_G(v)}$. 
Bad events correspond exactly to independent sets (since any neighbor of a vertex that is a sink is automatically not a sink).
Therefore, the right-hand side of~\eqref{eq:sink free} is exactly $2^{m}$ times the probability that such an orientation is sink free.
\end{proof}

\subsection*{The chromatic polynomial}
For a graph $G=(V,E)$ let $\chi(G;q)$ denote its chromatic polynomial, the unique monic polynomial of degree $|V|$ that satisfies that for each $q\in \mathbb{N}$, $\chi(G;q)$ is equal to the number of proper $q$-colorings of $G.$
In the context of Barvinok's interpolation method, relevant zero-free regions for the chromatic polynomial are of the form $\chi(G;q)\neq 0$ provided $|q|\geq K\Delta(G)$ (where $\Delta(G)$ denotes the maximum degree of $G$) for a uniform constant $K$. 
The first such bound was proved by Sokal~\cite{Sokalzeros} with various improvements over the years~\cites{FerProc,JPR2024,bencs2025improvedboundszeroschromatic}. 
The currently best known constant is $K=4.25$ and is due to the authors of the present paper~\cite{bencs2025improvedboundszeroschromatic}.

To apply the interpolation method we have to compute the low-order Taylor coefficients of the logarithm of the polynomial 
\[
P(G;z):=(-z)^{|V|}\chi(G;-1/z),
\]
which has no zeros in the disk centered at zero of radius $\tfrac{1}{4.25 \Delta(G)}$ by~\cite{bencs2025improvedboundszeroschromatic}.
In~\cite{PatReg17} this is done expressing the coefficients of $P(G;z)$ in terms of induced subgraph counts. 
Here we will take advantage of the fact that we can view $P(G;z)$ as the generating function of so-called \emph{broken circuit-free} forests by Whitney's theorem. 
One could thus view $P(G;z)$ as an evaluation of the multivariate independence polynomial of an associated graph. 
We then build on and extend the algorithm for computing coefficients of $\log(Z(G;x))$ and use it to design a simple algorithm to compute the coefficients of $\log(P(G;z))$ summarized in the following theorem.
\begin{theorem}\label{thm:chrom}
There exists an algorithm that on input of an $n$-vertex graph of maximum degree at most $\Delta$ and $m\in \mathbb{N}$ computes $\log(P(G;z))^{[m]}$ in $O(n\Delta(3e\Delta)^mm^3)$ time.
\end{theorem}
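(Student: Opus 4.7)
The plan is to combine Whitney's broken-circuit theorem with the same derivative-of-the-log and ratio-recursion scheme used for the hard-core model in Section~\ref{sec:hard-core}. Fix a total ordering of $E(G)$ and recall that a \emph{broken circuit} is a cycle with its largest edge removed. Whitney's theorem gives
$$P(G;z) = \sum_{S \subseteq E(G)\,:\, S \text{ is broken-circuit-free}} z^{|S|},$$
so $P(G;z)$ is the partition function of a hard-constraint model on the ground set $E(G)$ whose forbidden configurations are the broken circuits. In particular it has the same formal structure as an independence polynomial, but now on a hypergraph (one hyperedge per broken circuit) instead of a graph; as indicated above, one may equivalently view it as an evaluation of the multivariate independence polynomial of an associated graph whose vertices are the edges of $G$.

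With this interpretation in hand, for each edge $e \in E(G)$ I would define
$$R_{G,e}(z) := \frac{P^{e \text{ in}}(G;z)}{P^{e \text{ out}}(G;z)},$$
where $P^{e \text{ in}}$ (resp.\ $P^{e \text{ out}}$) sums $z^{|S|}$ over broken-circuit-free $S$ that contain (resp.\ do not contain) $e$. Mimicking the computation~\eqref{eq:derivative log=sum ratios} verbatim one obtains
$$z\frac{\frac{d}{dz}P(G;z)}{P(G;z)} = \sum_{e \in E(G)} \frac{R_{G,e}(z)}{1+R_{G,e}(z)},$$
which reduces the computation of $\log(P(G;z))^{[m]}$ to computing $R_{G,e}(z)^{[m]}$ for every edge $e$, up to $O(m^2)$ bookkeeping per edge for integrating and summing the resulting series.

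The main technical step, and the main obstacle, is to set up a Weitz-style recursion for $R_{G,e}(z)^{[m]}$ in the broken-circuit hypergraph and to bound the size of its truncated recursion tree. Conditioning on how the broken circuits through $e$ are (or are not) fully realized expresses $R_{G,e}(z)$ in terms of ratios on smaller sub-instances obtained by removing $e$ and its hypergraph neighbors; the recursion is truncated at depth $m$ since deeper contributions affect only Taylor coefficients of order $>m$. Unlike in the hard-core model, where the number of graph neighbors is bounded by $\Delta$, the number of broken circuits through a given edge can be unbounded, so a naive bound on the recursion tree is too weak. The factor $(3e\Delta)^m$ in the running time is designed to absorb this: it reflects a standard exponential enumeration of the broken-circuit-free configurations of total size at most $m$ that are reachable from a starting edge in a graph of maximum degree $\Delta$ (in spirit similar to the Borgs--Chayes type count of connected subgraphs containing a given vertex). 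Once this counting lemma is in place, multiplying by the $O(n\Delta)$ choices of starting edge and by an $O(m^3)$ overhead for the truncated polynomial arithmetic modulo $z^{m+1}$ yields the claimed $O(n\Delta(3e\Delta)^m m^3)$ bound.
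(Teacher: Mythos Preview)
Your setup is correct up to the point where you reduce to computing $R_{G,e}(z)^{[m]}$ for each edge. The gap is in the recursion itself. You write that ``conditioning on how the broken circuits through $e$ are (or are not) fully realized expresses $R_{G,e}(z)$ in terms of ratios on smaller sub-instances obtained by removing $e$ and its hypergraph neighbors,'' and then assert that $(3e\Delta)^m$ absorbs the blow-up. But for a hypergraph independence polynomial the identity $Z^{v\text{ in}}=\lambda_v Z(G\setminus N[v])$ fails: forcing $e$ in only forbids \emph{completing} any broken circuit through $e$, it does not delete all edges in such circuits. So there is no clean product recursion of Weitz type here, and even if there were, the set of broken circuits through $e$ is not bounded by any function of $\Delta$, so neither the branching factor nor the number of subproblems at one step is controlled. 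Saying the bound ``is designed to absorb this'' is not an argument; you have not specified a recursion whose tree has size $(3e\Delta)^m$.

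The paper avoids the hypergraph entirely. It works with \emph{vertex}-removal ratios on the original graph, $R_{S,v}(z)=P(G[S-v];z)/P(G[S];z)$, and obtains the recursion
\[
\frac{1}{R_{S,v}(z)}=1+\sum_{\substack{T\in\mathcal{T}^<_{G[S]}\\ v\in V(T),\,|T|\ge 1}} z^{|T|}\,\frac{P(G[S\setminus V(T)];z)}{P(G[S-v];z)},
\]
by grouping BCF forests according to the tree component containing $v$. The inner ratio telescopes into a product of $|T|$ vertex-removal ratios, each with depth budget reduced by $|T|$. Now the branching is over BCF subtrees of $G$ through $v$ with at most $m$ edges, and there are at most $(e\Delta)^k$ such trees of size $k$; summing $\sum_k (e\Delta)^k\cdot k\cdot \tau(m-k)$ is what produces the extra factor $3$ in $(3e\Delta)^m$. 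This tree-peeling step, together with the enumeration lemma for BCF trees, is exactly the missing idea in your proposal.
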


\subsection*{Graph homomorphisms}
Let $q\geq 2$ be an integer and let $A$ be a symmetric $q\times q$ matrix. 
For a graph $G=(V,E)$ let $\hom(G,A)$ denote the "number" of homomorphisms of $G$ into $A$, that is,
\[
    \hom(G,A)=\sum_{\phi:V\to[q]}\prod_{uv\in E}A_{\phi(u)\phi(v)}.
\]
In particular, if $A=A(H)$ is the adjacency matrix of some graph $H$, then $\hom(G,A)$ is exactly the number of homomorphism numbers of $G$ to $H$. 
In particular, if $H$ is the complete graph on $q$ vertices, $K_q$, then $\hom(G,A(K_q))=\chi(G,q)$. 
One could think about graph homomorphisms as multi-spin systems.

To apply Barvinok's interpolation method, let us encode the number of homomorphisms into a polynomial evaluation, by defining
\[
H(G;x):=q^{-|V|}\hom(G,J+x(A-J)),
\]
here $J$ denotes the $q\times q$ all ones matrix.
By definition, we see that $H(G;1)=q^{-|V|}\hom(G,A)$. 
To obtain an efficient algorithm for approximating $H(G;A)$ for the family of graphs of maximum degree at most $\Delta$ based on the interpolation method one requires an open set $\mathcal{U}\subset \mathbb{C}$ that contains the points $x=0$ and $x=1$ such 
that $H(G;x)\neq 0$ for all $x\in \mathcal{U}$ and graphs $G$ of maximum degree at most $\Delta$. See~\cite{Barbook}*{Section 2.2} and \cite{PatReg17}.
For example, this is  provided if $A\in\mathbb{C}^{q\times q}$ is close enough to the matrix $J$ in the infinity norm~\cite{Barbook} and also if $A=A(K_q)$ provided $q\geq 1.998\Delta$~\cites{bencs2024deterministicapproximatecountingcolorings,LSScorrelation}.
Having such a zero-free region one then needs to compute the first $O(\log|V|/\varepsilon)$ coefficients of $\log(H(G,x))$. 
In~\cite{PatReg17} this is done expressing these coefficients as induced graph counts.

In Section~\ref{sec:graph hom} we will devise a simple algorithm based on our algorithm for the chromatic polynomial.
The main idea is to realize $H(G;x)$ as a multivariate evaluation of the forest generating function following~\cite{bencs2024approximatingvolumetruncatedrelaxation}. 
After this is done, the algorithm proceeds much in the same way as for the chromatic polynomial.
We record here the relevant statement.
\begin{theorem}\label{thm: graph hom}
Let $q\in \mathbb{N}_{\geq 2}$ and let $A$ be symmetric $q\times q$ matrix.
Then there exists an algorithm that on input of a graph $G=(V,E)$ of maximum degree at most $\Delta$ and $m\in \mathbb{N}$ computes $\left(\log(H(G;x))\right)^{[m]}$ in time $O((e(q+1)\Delta)^mm^3)$.
\end{theorem}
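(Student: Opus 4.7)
The plan is to follow the template of Theorem~\ref{thm:chrom} by first realizing $H(G;x)$ as a multivariate evaluation of the forest generating function of an auxiliary graph, and then running the ratio-based recursion from the chromatic polynomial case on that multivariate polynomial. Since $(J + x(A-J))_{ij} = 1 + x(A_{ij}-1)$, expanding the product over edges gives
\[
q^{|V|}H(G;x) = \sum_{S\subseteq E} x^{|S|} \sum_{\phi: V\to[q]} \prod_{uv\in S}(A_{\phi(u)\phi(v)} - 1).
\]
The inner sum factorizes over connected components of $(V,S)$, and following~\cite{bencs2024approximatingvolumetruncatedrelaxation} one can rewrite $H(G;x)$ (up to a harmless prefactor) as a multivariate generating function over spanning subgraphs whose weights localize to vertices of $G$ together with a color from $[q]$. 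This puts $H(G;x)$ in the same algebraic form we used for the chromatic polynomial: an evaluation of a multivariate independence-type polynomial on an auxiliary graph whose ``vertices'' are vertex--color pairs of $G$, and whose conflicts come from incidence in $G$.

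With this representation in hand, I would run the same strategy as in Section~\ref{sec:hard-core} and the chromatic polynomial section. Applying the derivative identity
\[
x \frac{\frac{d}{dx} H(G;x)}{H(G;x)} = \sum_{v\in V}\sum_{i\in[q]} \frac{R_{G,(v,i)}(x)}{1 + R_{G,(v,i)}(x)},
\]
where $R_{G,(v,i)}(x)$ is the ratio associated to ``activating'' the vertex--color pair $(v,i)$ in the auxiliary graph, it suffices to compute each $R_{G,(v,i)}(x)$ modulo $x^{m+1}$. These ratios satisfy a Weitz-style recursion which, once one fixes $(v,i)$, descends into induced subgraphs obtained by deleting the closed neighborhood of $(v,i)$ in the auxiliary graph. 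Crucially, all intermediate computations can be carried out with polynomials truncated at degree $m$, so each recursive call performs $O(m^2)$ arithmetic operations.

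The main task is to bound the size of the recursion tree. Because the auxiliary graph has vertex--color pairs and conflicts arising from adjacency in $G$, the effective degree is at most $(q+1)\Delta - 1$: each activation of $(v,i)$ eliminates up to $q$ other colors at $v$ and up to $\Delta$ conflicting choices at the neighbors of $v$ in $G$. Unwinding the recursion to depth $m$ then yields at most $(e(q+1)\Delta)^m$ nodes, using the standard bound on the number of connected rooted subtrees of size $m$ in a tree of bounded degree (the same Stirling-type bound exploited in~\cite{PatReg17}). Multiplying by $O(m^2)$ arithmetic per node and $O(|V| q)$ starting pairs $(v,i)$ — and folding the factor $q$ into the constant — gives the claimed $O((e(q+1)\Delta)^m m^3)$ running time.

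The main obstacle I foresee is the first step: making the multivariate forest representation sufficiently explicit and local so that the induced-subgraph recursion is well-defined and its branching factor really is $(q+1)\Delta$ rather than something like $q\Delta$ or $q^2\Delta$. This is where the formalism of~\cite{bencs2024approximatingvolumetruncatedrelaxation} needs to be invoked carefully, since a naive cluster expansion over connected subgraphs (rather than forests) would produce a recursion tree that is too large and a truncation argument that is harder to justify. Once the representation is correct, the algorithmic and complexity parts follow essentially verbatim from the chromatic polynomial case.
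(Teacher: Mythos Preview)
Your proposal diverges from the paper at precisely the point you identify as the main obstacle, and the gap there is real.

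The paper does \emph{not} pass to an auxiliary graph on vertex--color pairs. It keeps the recursion on vertices of $G$ itself. After Lemma~\ref{lem:forest graph hom} writes $H(G;x)=\sum_{F\in\mathcal{F}_G}\prod_{T\in\mathcal{C}(F)}w_T(x)$ with tree weights
\[
w_T(x)=q^{-|V(T)|}\sum_{\phi:V(T)\to[q]}x^{|T|}\prod_{uv\in T}(A-J)_{\phi(u),\phi(v)}\prod_{uv\in B(T)}(J+x(A-J))_{\phi(u),\phi(v)},
\]
the ratio is $R_{S,v}(x)=H(G[S-v];x)/H(G[S];x)$ for $v\in V$, and its recursion has exactly the same shape as in the chromatic case: $1/R_{S,v}=1+\sum_{T\ni v}w_T\cdot(\text{product of }R\text{'s over }V(T)\setminus v)$, the sum ranging over subtrees $T$ of $G[S]$. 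The colors never appear as nodes in the recursion tree; the entire $q$-dependence is absorbed into the cost $O(q^{|V(T)|}\Delta|T|m)$ of evaluating each $w_T^{[m]}$. Combining the $(e\Delta)^k$ count of $k$-edge trees with the $q^{k+1}$ cost per tree, and choosing the base $(q+1)e\Delta$ so that $\sum_{k\ge1}k(q+1)^{-k}\le 3/4$, yields the stated bound.

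Your derivative identity
\[
x\frac{H'(G;x)}{H(G;x)}=\sum_{v\in V}\sum_{i\in[q]}\frac{R_{G,(v,i)}(x)}{1+R_{G,(v,i)}(x)}
\]
presupposes that $H(G;x)$ is the independence polynomial of some graph on pairs $(v,i)$, but for a general symmetric $A$ no such graph exists: the interaction $(A-J)_{ij}$ between $(u,i)$ and an adjacent $(v,j)$ is an arbitrary complex number, not a $0/1$ conflict, so there is no Weitz recursion to run. Even in a weighted version, $(v,i)$ interacts nontrivially with all $q\Delta$ pairs at neighbours of $v$, so the honest branching factor would be $q-1+q\Delta$, not $(q+1)\Delta-1$; your degree count of ``$q$ other colors at $v$ and $\Delta$ conflicting choices at neighbours'' is simply wrong for generic $A$. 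The fix is the paper's: keep the recursion on $V(G)$ via the forest expansion and pay for the colors inside $w_T$.
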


\subsection*{Organization}
The next section contains some relevant preliminaries. The remaining sections are devoted to the independence polynomial, sink-free orientations, the chromatic polynomial and graph homomorphisms respectively. In Section~\ref{sec:hard-core} we record the relevant algorithms for the independence polynomial in pseudo code, while in the other sections we have chosen not to do this to save space since the pseudo code for the independence polynomial can easily be adapted to these other settings.

\section{Preliminaries}\label{sec:prel}
In this section we collect some facts, definitions and notation about power series that will be used throughout later.

Let $U\subseteq \mathbb{C}$ be an open set containing $0$.
For an analytic function $f:U\to \mathbb{C}$ with Taylor-series at $0$, i.e.
\[
f(z)=\sum_{k\ge 0} a_kz^k,
\]
we let $f^{[m]}(z)$ denote the $m$th Taylor polynomial of $f$, i.e. 
\[
f^{[m]}(z)=\sum_{k=0}^{m} a_kz^k.
\]
Now let $\st{m}$ denote polynomial multiplication up to the $m$th coefficient, i.e.
\[
p(z)\st{m}q(z)=\left(p(z)q(z)\right)^{[m]}.
\]
Note that, $p(z)\st{m}q(z)=p^{[m]}(z)\st{m}q^{[m]}(z)$, which  means that to compute $p(z)\st{m}q(z)$ we need at most $O(m^2)$ many operations. (Using fast polynomial multiplication one could obtain a better running time.)

\begin{lemma}\label{lemma:reciprocial}
Let $f:U\to \mathbb{C}$ be an analytic function with a Taylor-series at $0$ such that $f(0)=0$. Then there is an $O(m^3)$-time algorithm that on input of $f^{[m]}(z)$ computes
\[
\left(\frac{1}{1-f(z)}\right)^{[m]}=\left(\frac{1}{1-f^{[m]}(z)}\right)^{[m]}.
\]
\end{lemma}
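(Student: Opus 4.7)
The plan is to expand $\frac{1}{1-f(z)}$ as a geometric series and truncate. Since $f(0)=0$, the power series $f(z)^k$ has a zero of order at least $k$ at the origin, so for $k>m$ it contributes nothing to coefficients of degree at most $m$. Hence
\[
\left(\frac{1}{1-f(z)}\right)^{[m]} = \left(\sum_{k=0}^{m} f(z)^k\right)^{[m]},
\]
and the analogous identity holds with $f^{[m]}(z)$ in place of $f(z)$ (note that $f^{[m]}(0)=0$ as well). To see that these two quantities actually coincide, write
\[
\frac{1}{1-f(z)}-\frac{1}{1-f^{[m]}(z)} = \frac{f(z)-f^{[m]}(z)}{(1-f(z))(1-f^{[m]}(z))}.
\]
The numerator is divisible by $z^{m+1}$ by definition of the truncation, and the denominator is a unit in the ring of formal power series at $0$ since both factors have constant term $1$. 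Thus the difference is $O(z^{m+1})$, which establishes the displayed equality in the statement.

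For the algorithm, set $p_0(z):=1$ and iteratively compute $p_{k+1}(z):=p_k(z)\st{m}f^{[m]}(z)$ for $k=0,1,\ldots,m-1$. By induction $p_k(z)=\left(f^{[m]}(z)^k\right)^{[m]}$, using that for any power series $u,v$ with no negative terms one has $(u^{[m]}\cdot v)^{[m]}=(u\cdot v)^{[m]}$. Each truncated multiplication involves two polynomials of degree at most $m$ and therefore costs $O(m^2)$ operations, so after $m$ iterations we have obtained all of $p_0,\ldots,p_m$ in $O(m^3)$ time. Summing them requires a further $O(m^2)$ operations, and by the first paragraph the result is exactly $\left(\frac{1}{1-f^{[m]}(z)}\right)^{[m]}=\left(\frac{1}{1-f(z)}\right)^{[m]}$.

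The argument is essentially routine; the only mild subtlety is justifying that truncating $f$ before inverting and truncating after inverting produce the same $m$-th Taylor polynomial, but the power series identity above handles this, relying crucially on $f(0)=0$ to ensure the denominators are invertible as formal power series.
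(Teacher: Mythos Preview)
Your proof is correct and follows essentially the same approach as the paper: expand $\tfrac{1}{1-f}$ as a geometric series, use $f(0)=0$ to truncate at $k=m$, and compute the truncated powers iteratively with $m$ truncated multiplications of cost $O(m^2)$ each. The only cosmetic difference is that the paper writes $f(z)=zg(z)$ and evaluates $\sum_{k=0}^m (zg^{[m]})^k$ via a Horner-type scheme rather than summing the powers separately, but this does not change the argument or the asymptotic cost.
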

\begin{proof}
    By definition we know that there is an analytic function $g(z)$ such $f(z)=zg(z)$. 
    The Taylor series at $z=0$ of $1/(1-z)$ is given by $\sum_{k\ge 0} z^k$. Thus
    \[
        \frac{1}{1-f(z)}=\sum_{k\ge 0} f(z)^k=\sum_{k\ge 0} z^kg(z)^k.
    \]
    By taking the terms of degree at most $z^{m}$ we get that
    \[
    \left(\frac{1}{1-f(z)}\right)^{[m]}=\left(\sum_{k=0}^{m}z^kg^{[m]}(z)^k\right)^{[m]}=\underbrace{(\dots((}_{m}zg^{[m]}(z)+1)\st{m}zg^{[m]}(z)+1)\st{m}\dots+1)^{[m]},
    \]
    which proves the desired statement.
\end{proof}

\section{The hard-core model}\label{sec:hard-core}
Recall from the introduction that for a graph $G=(V,E)$ we denote by $Z(G;x)$ the partition function of the hard-core model on $G$. The polynomial $Z(G;x)$ is also known as the independence polynomial of $G$.

In this section we present a simple algorithm to compute the coefficients of the Taylor series of $\log(Z(G;x))$.
In what follows we will use the following basic lemma.

\begin{lemma}\label{lemma:independence}
For any graph $G$ and vertex $v$ we have
\[
    Z(G;x)=\underbrace{Z(G-v;x)}_{Z^{v \textrm{ out}}(G;x)}+\underbrace{x Z(G\setminus N[v];x)}_{Z^{v\textrm{ in}}(G;x)}.
\]
\end{lemma}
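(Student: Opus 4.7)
The plan is to prove this by a direct case-split on whether the vertex $v$ lies in the independent set being summed over. The identity is the standard "deletion" recursion for the independence polynomial, and the proof amounts to matching each term in the defining sum of $Z(G;x)$ with a term on the right-hand side.

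First, I would start from the definition
\[
Z(G;x)=\sum_{I\in \mathcal{I}_G}x^{|I|}
\]
and partition $\mathcal{I}_G$ into the independent sets with $v\notin I$ and those with $v\in I$. For the first class, an independent set of $G$ avoiding $v$ is the same data as an independent set of $G-v$, so the corresponding partial sum equals $Z(G-v;x)$, which is exactly $Z^{v\text{ out}}(G;x)$.

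For the second class, I would exhibit a bijection between independent sets $I\in\mathcal{I}_G$ containing $v$ and independent sets $J\in\mathcal{I}_{G\setminus N[v]}$, given by $I\mapsto I\setminus\{v\}$. The key observation justifying this bijection is that independence forces $I$ to contain no neighbor of $v$, so $I\setminus\{v\}\subseteq V\setminus N[v]$ and no edge of $G$ lies inside $I\setminus\{v\}$; conversely, any independent $J\subseteq V\setminus N[v]$ yields an independent set $J\cup\{v\}$ of $G$. Under this bijection $|I|=|J|+1$, so the second partial sum is
\[
\sum_{I\in \mathcal{I}_G:\,v\in I}x^{|I|}= x\sum_{J\in \mathcal{I}_{G\setminus N[v]}}x^{|J|}=xZ(G\setminus N[v];x),
\]
which is $Z^{v\text{ in}}(G;x)$. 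Adding the two contributions yields the claimed identity.

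There is essentially no obstacle here; the only thing to be careful about is correctly using the closed neighborhood $N[v]=N(v)\cup\{v\}$ rather than the open neighborhood, so that $v$ is not duplicated when one adjoins it to $J$. Given how short the argument is, in the final write-up I would probably present it in one or two lines.
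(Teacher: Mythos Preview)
Your proof is correct and is exactly the standard argument; the paper itself states this as a ``basic lemma'' and does not supply a proof, so there is nothing to compare against. Your care with $N[v]$ versus $N(v)$ is appropriate, and the one- or two-line write-up you describe is all that is needed.
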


As mentioned in the introduction, instead of directly computing the coefficients of $\log(Z(G;x))$ we will actually compute the coefficients of its derivative.

It will be convenient to introduce the following notation.
For $S\subseteq V$ and $v\in S$ let us define.
\begin{equation}\label{eq:ratio ind}
R_{S,v}(x)=\frac{Z^{v \textrm{ in}}(G[S];x)}{Z^{v \textrm{ out}}(G[S];x)}
\end{equation}
and let $R_{S,v}^{[k]}(x)$ be the $k$th Taylor polynomial of $R_{S,v}(x)$ around $0$.
Using this notation, we see that by~\eqref{eq:derivative log=sum ratios},

\begin{equation}\label{eq:ratio_sum}
\left(x\frac{d}{dx}\log(Z(G;x))\right)^{[m]}
    =\sum_{v\in V} R_{V,v}^{[m]}(x)\st{m}\left(\frac{1}{1+R_{V,v}(x)}\right)^{[m]},
\end{equation}
Since $R_{S,v}(0)=0$ we can apply Lemma~\ref{lemma:reciprocial} to obtain that the right-hand side of~\eqref{eq:ratio_sum} is computable in at most $O(n m^3)$ time if we have access to 
$\{R_{V,v}^{[m]}(x)\}_{v\in V}$. 


Our main task is therefore to devise an algorithm to compute the $k$th Taylor polynomial of $R_{V,v}(x)$. This is the content of the next proposition.

\begin{proposition}\label{prop:alg ind}
Under Assumption~\ref{as:computation}, there is an $O(\Delta^{k+1} k^3)$-time algorithm that on input of a graph $G=(V,E)$ of maximum degree at most $\Delta$ a vertex $v\in V$ and $k\in \mathbb{N}$ computes $R_{G,v}^{[k]}(x)$.
\end{proposition}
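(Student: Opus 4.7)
The plan is to establish a Weitz-style recursion for the \emph{power series} $R_{G,v}(x)$ itself and then to implement it directly on truncated Taylor polynomials using Lemma~\ref{lemma:reciprocial}.

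First I would derive the recursion. Fix an arbitrary ordering $u_1,\dots,u_d$ of the neighbors of $v$, let $G_0=G-v$, and inductively set $G_i=G_{i-1}-u_i$, so that $G_d=G\setminus N[v]$. Applying Lemma~\ref{lemma:independence} to the vertex $u_i$ inside $G_{i-1}$ shows that the ratio $Z(G_{i-1}-u_i;x)/Z(G_{i-1};x)$ equals $1/(1+R_{G_{i-1},u_i}(x))$. Telescoping the definition of $R_{G,v}(x)$ along $i=1,\dots,d$ then yields
\[
    R_{G,v}(x)\;=\;x\cdot\frac{Z(G\setminus N[v];x)}{Z(G-v;x)}\;=\;x\prod_{i=1}^{d}\frac{1}{1+R_{G_{i-1},u_i}(x)}.
\]
All quantities are analytic at $0$ because $Z(\cdot;0)=1$ and $R_{\cdot,\cdot}(0)=0$, so this is a genuine identity of formal power series.

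Next I would turn this identity into a recursive algorithm. Because of the outer factor of $x$, in order to read off $R_{G,v}^{[k]}(x)$ it is enough to know the product $\prod_{i=1}^{d}\tfrac{1}{1+R_{G_{i-1},u_i}(x)}$ truncated to degree $k-1$, so I recursively request $R_{G_{i-1},u_i}^{[k-1]}(x)$ for each $i$. Given these, I feed each into Lemma~\ref{lemma:reciprocial} (which applies since $R_{G_{i-1},u_i}(0)=0$) to obtain $\bigl(1/(1+R_{G_{i-1},u_i}(x))\bigr)^{[k-1]}$ in $O(k^3)$ time per factor, combine the $d\le\Delta$ factors with $d-1$ truncated multiplications $\st{k-1}$ at $O(k^2)$ each, and finally multiply by $x$. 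The base case $k=0$ just returns the zero polynomial. Under Assumption~\ref{as:computation} the neighborhood data needed to form the subgraphs $G_{i-1}$ is available in time linear in the number of neighbors touched.

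Finally I would analyse the running time. The local work at a single recursion node is $O(\Delta k^3)+O(\Delta k^2)=O(\Delta k^3)$, and each node spawns at most $\Delta$ children, giving the recurrence $T(k)\le\Delta\,T(k-1)+O(\Delta k^3)$ with $T(0)=O(1)$. Unrolling produces $T(k)=O(\Delta^{k+1})\sum_{j=1}^{k}j^3/\Delta^{j}$, which is bounded by $O(\Delta^{k+1}k^3)$, the claimed bound (in fact by $O(\Delta^{k+1})$ when $\Delta\ge 2$, since the series converges). I do not anticipate a real obstacle here: the recursion is exactly Weitz's, lifted from numerical values to power series, and Lemma~\ref{lemma:reciprocial} handles the only nontrivial power-series operation. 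The only point that requires care is the degree-bookkeeping that ensures each recursive call is made at depth one lower, so that the recursion tree has size $\Theta(\Delta^{k})$ rather than something larger.
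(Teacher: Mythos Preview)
Your proposal is correct and follows essentially the same approach as the paper: derive the Weitz recursion
\[
R_{S,v}(x)=\frac{x}{\prod_{i=1}^{\ell}\bigl(1+R_{S\setminus\{v,u_1,\dots,u_{i-1}\},u_i}(x)\bigr)},
\]
exploit the leading factor $x$ to drop the required precision by one at each recursive level, and bound the resulting recurrence by $O(\Delta^{k+1}k^3)$. The only cosmetic difference is that the paper first forms the truncated product $\prod_i(1+R_i)$, subtracts $1$ to get a single series $F$ with $F(0)=0$, and then invokes Lemma~\ref{lemma:reciprocial} \emph{once}, whereas you invert each factor separately and multiply the results; both routes give the same asymptotic local cost and the same final bound. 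The paper is slightly more explicit about how the induced subgraph $G[S]$ is accessed (it tracks $V\setminus S$, of size $O(k\Delta)$, and filters neighbors against it in $O(\Delta^2 k)$ time), which is the one place where your ``linear in the number of neighbors touched'' could use a sentence of justification.
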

\begin{proof}
We first describe an algorithm to compute $R_{S,v}^{[k]}(x)$ for a subset $S$ of $V$ and $v\in S$. 
Here the algorithm has access to $V\setminus S$ so that it can determine $N_G(v)\cap S$ by simply removing any element from $N_G(v)$ that is contained in $V\setminus S$.

If $k=0$, then by definition we can output $R_{S,v}^{[k]}(x)=0$. 
Otherwise, let $\{u_1,\dots,u_\ell\}=S\cap N_{G}(v)$. 
By Lemma~\ref{lemma:independence} we know that
\begin{align*}
R_{S,v}(x)&=\frac{Z^{\textrm{ in}}(G[S];x)}{Z^{v \textrm{ out}}(G[S];x)}=\frac{x Z(G[S\setminus N[v]];x)}{Z(G[S-v];x)}\\
&=\frac{x}{\prod_{i=1}^\ell\frac{Z(G[S\setminus\{v,u_1,\dots,u_{i-1}\}];x)}{Z(G[S\setminus\{v,u_1,\dots,u_{i}\}];x)}}=\frac{x}{\prod_{i=1}^\ell\left(1+R_{S\setminus\{v,u_1,\dots,u_{i-1}\},u_i}(x)\right)}.
\end{align*}
Therefore
\[
R_{S,v}^{[k]}(x)=x\left(\frac{1}{1+F(x)}\right)^{[k-1]},
\]
where $F(x)=\left(\prod_{i=1}^\ell1+R^{[k-1]}_{S\setminus\{v,u_1,\dots,u_{i-1}\},u_i}(x)\right)^{[k-1]}-1$.

Now let us estimate the computation time based on this formula. 
Let $\tau(k)$ denote the computational time needed to compute $R_{S,v}^{[k]}(x)$ for any $S\subseteq V$ such that $V\setminus S$ has size at most $O(k\Delta)$. 
We would like to show inductively that $\tau(k) \leq C\Delta^{k+1}k^3$ for some positive constant $C$.

Note that at the start of the algorithm $S=V$ and in every recursive call the set is decreased by at most $\Delta+1$ so that throughout $|V\setminus S|\leq k(\Delta+1)=O(\Delta k)$.
To compute $F(x)$ we need to determine $S\cap N_G(v)$, which takes time $O(\Delta^2 k)$, we need to determine $R^{[k-1]}_{S\setminus\{v,u_1,\dots,u_{i-1}\},u_i}(x)$ for $i=1\ldots, \ell$, which takes time $\ell \tau(k-1)$ and finally we need to carry out $\ell\leq \Delta$ multiplications, which takes time at most $\Delta O(k^2).$
So in total this gives a bound of $\Delta(O(k^2+O(\Delta k))+\tau(k-1))$ on the time to compute $F(x).$
By Lemma~\ref{lemma:reciprocial} we can compute $R_{S,v}^{[k]}(x)$ from $F(x)$ in time $O(k^3)$. 
This means that
\[
    \tau(k)\le O(k^3)+O(\Delta^2k)+\Delta(O(k^2)) +\Delta \tau(k-1),
\]
and thus by induction
\[
\tau(k)\le O(k(k^2+k\Delta+\Delta^2))+\Delta C(\Delta^k (k-1)^3)\leq C(\Delta^{k+1}k^3),
\]
provided $C$ is large enough.

\end{proof}

We have recorded the algorithm used in the proof of Proposition~\ref{prop:alg ind} as Algorithm 1 in pseudo code.

Theorem~\ref{thm:hard-core} now follows as a direct corollary to this proposition.
The algorithm is recorded in pseudo code as Algorithm 2 below.


\begin{algorithm}[H]
\caption{$R(G,S,v,k)$}\label{alg:indep_ratio}
\begin{algorithmic}[1]
\Require graph $G$, $S\subseteq V(G)$ $v\in S$, $k\in\mathbb{N}$
\Ensure $p(x)=R_{S,v}^{[k]}(x)$
\If{$k=0$}
    \State $p=0$
\ElsIf{$|S|=1$}
    \State $p=x$
\ElsIf{$k>0$ and $|S|>1$}
    \State Let $\{u_1,\dots u_\ell\}=N_G(v)\cap S$
    \State $S_0 \gets S\setminus \{v\}$
    
    \For{ $i=1,\dots,\ell$}
        \State $R_i \gets R(G,S_0,u_i,k-1)$ \Comment{This is the recursive call}
        \State $S_0 \gets S_0\setminus\{u_i\}$
    \EndFor
    
    \State $F(x)=1$
    \For{$i=1,\dots,\ell$}
        \State $F(x) \gets F(x)\st{k-1} (1+R_i)$
    \EndFor
    \State $F(x)\gets F(x)-1$
    \State $p(x) \gets x\left(\frac{1}{1+F(x)}\right)^{[k-1]}$ \Comment{Apply Lemma~\ref{lemma:reciprocial}}
\EndIf
\end{algorithmic}
\end{algorithm}

    
    

\begin{algorithm}[H]
\caption{$LOG\_Z(G,m)$}\label{alg:indep_log}
\begin{algorithmic}[1]
\Require graph $G$,  $m\in\mathbb{N}$
\Ensure $p=\log(Z(G;x))^{[m]}$
\If{$|V(G)|=0$}
    \State $p(x)=0$
\ElsIf{$m=0$}
    \State $p(x)=0$
\ElsIf{$m>0$ and $|S|\ge 1$}
    \State $p(x)\gets 0$
    \For{$v\in V(G)$} 
        \State $R_v(x)\gets R(G,V(G),v,m)$ \Comment{Recursive call}

        \State $F(x)=\left(\frac{1}{1+R_v(x)}\right)^{[m-1]}$  \Comment{Apply Lemma~\ref{lemma:reciprocial}}
        \State $p(x)\gets p(x)+R_v(x)\st{m-1} F(x)$
    \EndFor
    \State $p(x)\gets \sum_{k=1}^{m} \frac{[x^k]p(x)}{k}x^k$
\EndIf
\end{algorithmic}
\end{algorithm}

\section{Sink-free orientations}\label{sec:sink free}
In this section we will prove Theorem~\ref{thm:sink free}. 

We start with some additional notation 
For any graph $G$ and $U\subseteq V(G)$ let us define the polynomial
\[
Z_{\textrm{sfo},U}(G;t)=\sum_{S\in\mathcal{I}(G[U])}\prod_{v\in S}(-t^{\deg_G(v)}).
\]
It is important to stress here that even though $U$ may not be equal to $V$ we do take the weight of vertex $v$ to be equal to $-t^{\deg_G(v)}$.
As in the previous section it will be useful to consider ratios. To this end we define for a vertex $v\in U$ the following rational function,
\begin{equation}\label{eq:ratio sf}
    R_{U,v}(t):=\frac{Z^{v \text{ in}}_{\textrm{sfo},U}(G;t)}{Z^{v\text{ out}}_{\textrm{sfo},U}(G;t)}=\frac{-t^{\deg_G(v)}Z_{\textrm{sfo},U\setminus N_G[v]}(G;t)}{Z_{\textrm{sfo},U-v}(G;t)},
\end{equation}
where the superscripts $v\text{ in}$ (resp. $v\text{ out}$) indicate that we only sum over independent sets containing $v$ (resp. not containing $v$) and where we use Lemma~\ref{lemma:independence} for the second equality.

Denote by $u_1,\ldots,u_\ell$ the neighbors of $v$ in $U$. 
Similar as in the previous section we have the identity
\begin{equation}\label{eq:ratio recursion sf}
        R_{U,v}(t)=\frac{\left(-t^{\deg_G(v)}\right)Z_{\textrm{sfo},U\setminus N_G[v]}(G,t)}{Z_{\textrm{sfo},U-v}(G,t)}=\frac{-t^{\deg_G(v)}}{\prod_{i=1}^\ell (1+R_{U\setminus\{v,u_1,\dots,u_{i-1}\},u_i} (G,t))}.
\end{equation}

\subsection{A zero-free region}
The next lemma establishes a zero-free disk for $Z_{\textrm{sfo},U}(G,t)$ allowing us to use Barvinok's interpolation method.
In what follows for $r>0$, $B_r(0)$ denotes the disk of radius $r$ centered at $0$.
Let for a positive integer $\delta$,
\[
r_\delta:=\frac{(\delta-1)^{\tfrac{\delta-1}{\delta}}}{\delta}
\] 
and note that $r_3>1/2$.
\begin{lemma}
Let $\delta\ge 3$ be an integer and let $r=r_\delta$.
Then for any $t\in B_{r}(0)$ and any graph $G=(V,E)$ with minimum degree at least $\delta$ and $U\subseteq V$,
\[
Z_{\textrm{sfo},U}(G,t)\neq 0.
\]
\end{lemma}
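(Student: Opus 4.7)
The plan is to prove the lemma by strong induction on $|U|$, establishing the following strengthened statement (write $d_v := \deg_G(v)$): for all $|t|\le r_\delta$ we have $Z_{\textrm{sfo},U}(G;t)\neq 0$, and for every $v\in U$ that is \emph{reduced in $U$}, meaning $\deg_{G[U]}(v)<d_v$ (equivalently, $v$ has at least one neighbor in $V\setminus U$), the ratio $R_{U,v}(t)$ from~\eqref{eq:ratio sf} satisfies $|R_{U,v}(t)|\le 1/\delta$. The base case $U=\emptyset$ is trivial.

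For the inductive step, fix any $v\in U$, let $u_1,\dots,u_\ell$ be the neighbors of $v$ in $U$ and $U_i = U\setminus\{v,u_1,\dots,u_{i-1}\}$. The key structural observation is that each $u_i$ is automatically reduced in $U_i$, because $v\in N_G(u_i)\setminus U_i$. Thus the inductive hypothesis gives $|R_{U_i,u_i}(t)|\le 1/\delta$ and hence $|1+R_{U_i,u_i}(t)|\ge (\delta-1)/\delta$. Plugging this into~\eqref{eq:ratio recursion sf} together with $|t|^{d_v}\le r_\delta^{d_v}$ yields
\[
|R_{U,v}(t)| \;\le\; r_\delta^{d_v}\Bigl(\tfrac{\delta}{\delta-1}\Bigr)^{\ell} \;=\; r_\delta^{d_v-\ell}\bigl((\delta-1)^{-1/\delta}\bigr)^{\ell},
\]
where we have used the key algebraic identity $r_\delta\cdot\delta/(\delta-1)=(\delta-1)^{-1/\delta}$. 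If $v$ is reduced in $U$, then $\ell\le d_v-1$ and the right-hand side is at most $(\delta-1)^{(\delta-d_v)/\delta}/\delta \le 1/\delta$ (using $d_v\ge\delta$), preserving the inductive bound at $v$. If $v$ is not reduced in $U$, then $\ell=d_v$ and one only obtains $|R_{U,v}(t)|\le (\delta-1)^{-d_v/\delta}\le 1/(\delta-1)$, which for $\delta\ge 3$ is still strictly less than $1$. In either case $1+R_{U,v}(t)\neq 0$, and combined with $Z_{\textrm{sfo},U-v}(G;t)\neq 0$ from the inductive hypothesis and Lemma~\ref{lemma:independence} this yields $Z_{\textrm{sfo},U}(G;t)\neq 0$.

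The crucial point---and the main obstacle to identify---is the correct two-track inductive hypothesis. A uniform bound $|R|\le p$ for all $v$ would only yield the suboptimal radius $\delta/(\delta+1)^{1+1/\delta}<r_\delta$. The trick is that the sharper bound $1/\delta$ is precisely what the recursion produces from reduced vertices, and every recursive call lands on a reduced vertex (the parent is always removed); meanwhile the weaker bound $1/(\delta-1)$ only needs to hold for the one outermost, possibly non-reduced, vertex, and there it is used only to conclude zero-freeness rather than to feed back into the induction. This two-speed behaviour is what exactly matches the critical contraction rate $(\delta-1)^{-1/\delta}$ encoded in $r_\delta$.
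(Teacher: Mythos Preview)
Your proof is correct and follows essentially the same approach as the paper: strong induction on $|U|$, with the two-track hypothesis that (i) $Z_{\textrm{sfo},U}$ is nonzero and (ii) the ratio bound $|R_{U,v}|\le 1/\delta$ holds for vertices $v$ with a neighbor outside $U$, and the recursion~\eqref{eq:ratio recursion sf} driving the inductive step. Your organisation is in fact slightly cleaner than the paper's: the paper first reduces to connected $G$, then runs the induction only for $U\subsetneq V$ (so that a reduced vertex is guaranteed to exist), and treats $U=V$ as a separate final step; you instead handle every $U$ uniformly by observing that even for a non-reduced $v$ one still gets $|R_{U,v}|\le 1/(\delta-1)<1$, which already suffices for zero-freeness and removes both the connectedness assumption and the separate endgame.
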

\begin{proof}
We may assume that $G$ is connected.
We first consider the case $U\subsetneq V$.
We prove by induction on the size of $U$ the following claims:
    \begin{enumerate}
        \item[(i)] $Z_{\textrm{sfo},U}(G,t)\neq 0$,
        \item[(ii)] for each vertex $v\in U$ such that $v$ has a neighbor in $V\setminus U$, $|R_{U,v}(t)|\le 1/\delta$.
    \end{enumerate}
Note that both claims trivially hold in case $U=\emptyset$. 
Now assume $U\neq \emptyset$.
We note that item (ii) implies (i). Indeed, since $G$ is connected there exists a vertex $v\in U$ that has a neighbor in $V\setminus U$.
By induction $Z_{\textrm{sfo},U-v}(G;t)\neq 0$ and since 
\begin{equation}\label{eq:Z_U vs 1+R}
\frac{Z_{\textrm{sfo},U}(G;t)}{Z_{\textrm{sfo},U-v}(G;t)}=1+R_{U,v}(t)
\end{equation}
it follows that $Z_{\textrm{sfo},U}(G;t)\neq 0$. 

We now focus on proving (ii). To this end let $v\in U$ such that $v$ has a neighbor in $V\setminus U$.
Let $u_1,\ldots,u_\ell$ be the neighbors of $v$ inside $U$ and note that $\ell\leq \deg_G(v)-1$.
By~\eqref{eq:ratio recursion sf} and induction we have
\begin{align*}
        |R_{U,v}(t)|&=
        \left|\frac{-t^{\deg_G(v)}}{\prod_{i=1}^\ell (1+R_{U\setminus\{v,u_1,\dots,u_{i-1}\},u_i} (G,t))}\right|
        \\
        &\leq \frac{|t|^{\deg_G(v)}}{\prod_{i=1}^\ell (1-|R_{U\setminus\{v,u_1,\dots,u_{i-1}\},u_i} (G,t))|}
        \\ 
        &\leq (1-1/\delta)^{-\ell} r^{\deg_G(v)}\leq r\left(\frac{r}{1-1/\delta}\right)^{\deg_G(v)-1}
        \\
        &\leq r\left(\frac{r}{1-1/\delta}\right)^{\delta-1}=1/\delta,
        \end{align*}
where the final inequality is due to the fact that $r=r_\delta< 1-1/\delta$. 
Indeed it suffices to check that $r^\delta<(1-1/\delta)^\delta$ and this holds since $r^\delta=1/\delta(1-1/\delta)^{\delta-1}$ and $\delta\geq 3$.
This proves item (ii) of the induction.
We now finally have to check that $Z_{\textrm{sfo},V}(G;t)\neq 0.$
To this end we realize that for any vertex $v$, arguing as above, we have by induction,
\begin{align*}
    |R_{V,v}(t)|\leq \left(\frac{r}{1-1/\delta}\right)^{\deg_G(v)}\leq  \left(\frac{r}{1-1/\delta}\right)^\delta=\frac{1/\delta}{1-1/\delta}=\frac{1}{\delta-1}<1.
\end{align*}
Therefore we must have $Z_{\textrm{sfo},V}(G;t)\neq 0$ because of~\eqref{eq:Z_U vs 1+R} applied to $U=V$.
This finishes the proof.
\end{proof}

As an immediate corollary to our zero-freeness result and (the proof of) ~\cite{Barbook}*{Lemma 2.2.1} we have the following.
\begin{corollary}\label{cor:zero-free sf}
Let $\delta\geq 3$ be an integer and let $G=(V,E)$ be an $m$-edge graph of minimum degree at least $\delta$.
Denote by $f_k(t)$ the $k$th order Taylor approximation to $\log(Z_{\textrm{sfo},V}(G;t))$.
Then 
\[
|f_k(1/2)-\log(Z_{\textrm{sfo},V}(G;1/2))|\leq \frac{2m}{(k+1)(2r_\delta-1)(2 r_\delta)^k}.
\]
\end{corollary}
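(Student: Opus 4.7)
The plan is to apply the standard tail bound for Taylor expansions of logarithms of polynomials on a zero-free disk~\cite{Barbook}*{Lemma~2.2.1} to the polynomial $P(t):=Z_{\textrm{sfo},V}(G;t)$, using the zero-free disk $B_{r_\delta}(0)$ just established in the preceding lemma.

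First I would record two simple facts about $P$. Each summand $\prod_{v\in S}(-t^{\deg_G(v)})$ in the defining sum has $t$-degree $\sum_{v\in S}\deg_G(v)\leq \sum_{v\in V}\deg_G(v)=2m$, so $P$ is a polynomial of degree $N\leq 2m$. Moreover $P(0)=1$ (only $S=\emptyset$ contributes at $t=0$), so the principal branch of $\log P$ is analytic on $B_{r_\delta}(0)$ and vanishes at the origin. Using the previous lemma, I would then factor $P(t)=\prod_{i=1}^N\left(1-\tfrac{t}{\zeta_i}\right)$ with each zero $|\zeta_i|\geq r_\delta$, expand
\[
\log P(t)=-\sum_{i=1}^N\sum_{j\geq 1}\frac{1}{j}\left(\frac{t}{\zeta_i}\right)^j,
\]
and identify $f_k(t)$ with the truncation of this double series to $j\leq k$.

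Finally, I would specialize to $t=1/2$. Since $r_\delta\geq r_3>1/2$, we have $|t/\zeta_i|\leq 1/(2r_\delta)<1$, so the error is exactly the tail $j>k$. Bounding $1/j\leq 1/(k+1)$ and summing the resulting geometric series in $i$ and $j$ gives
\[
|\log P(1/2)-f_k(1/2)|\leq \frac{N}{k+1}\cdot\frac{(2r_\delta)^{-(k+1)}}{1-(2r_\delta)^{-1}}=\frac{N}{(k+1)(2r_\delta-1)(2r_\delta)^k},
\]
and substituting $N\leq 2m$ yields the claimed bound. The only point that really wants a second look is the degree bound $N\leq 2m$ (a naive bound in terms of the number of independent sets, or in terms of $|V|$, would be far worse and would weaken the final estimate); everything else is a textbook geometric-series estimate and I do not anticipate a genuine obstacle.
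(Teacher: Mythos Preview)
Your proposal is correct and follows essentially the same approach as the paper: the only substantive observation is that $\deg Z_{\textrm{sfo},V}(G;t)\le \sum_{v\in V}\deg_G(v)=2m$, after which the bound is exactly (the proof of) \cite{Barbook}*{Lemma~2.2.1} applied on the zero-free disk $B_{r_\delta}(0)$ at the point $t=1/2$. The paper simply cites that lemma rather than unpacking the geometric-series computation, but the content is identical.
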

\begin{proof}
This follows directly once we realize that the degree of the polynomial $Z_{\textrm{sfo},V}(G;t)$ is bounded by $\sum_{v\in V}\deg_G(v)=2m$.
\end{proof}

For $f_k(1/2)$ to be a $\varepsilon$-approximation it suffices to choose $k\geq \frac{\log(m/\varepsilon)}{\log(2r_\delta)}$.
When $\delta=3$ this gives us that $k\geq 17.66\log(m/\varepsilon)$, if $\delta\geq 5$ it suffices to take $k\geq 5.2\log(m/\varepsilon)$,
while if $\delta$ tends to $\infty$ we have that $r_\delta\to 1$ and hence it suffices to have $k\geq 1.45\log(m/\varepsilon)$ when $\delta$ is large enough.

\subsection{The algorithm}
We will use the same framework as established in the previous section to obtain an FPTAS for $\textrm{sfo}(G)=Z_{\textrm{sfo},V(G)}(G,1/2)$. So consider
\begin{equation}
     t\frac{d }{d t}\log(Z_{\textrm{sfo},V}(t))=\sum_{v\in V}\left(-\deg_G(v)t^{\deg_G(v)}\right)\frac{Z_{\textrm{sfo},V\setminus N_G[v]}(G,t)}{Z_{\textrm{sfo},V}(G,t)}=\sum_{v\in V} \frac{\deg_G(v)R_{V,v}(t)}{1+R_{V,v}(t)}.\label{eq:der log sink free}  
\end{equation}

As in the previous section, our main goal will be to compute the Taylor coefficients of the ratios $R_{V,v}(t).$

\begin{proposition}\label{prop:alg sf}
Under Assumption~\ref{as:computation}, there is an $O(e^{k/e})$-time algorithm with input a graph $G=(V,E)$, a vertex $v\in V$ and $k\in \mathbb{N}$ computes $R_{V,v}^{[k]}(t)$.
\end{proposition}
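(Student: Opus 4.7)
The plan is to mirror the proof of Proposition~\ref{prop:alg ind}, setting up a recursive algorithm based on the identity~\eqref{eq:ratio recursion sf}, but with one crucial new observation that removes any explicit dependence on the maximum degree of $G$ and gives a runtime governed by the minimum-degree condition $\delta \geq 3$.

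The key structural fact is that the Taylor series of $R_{U,v}(t)$ at $0$ begins at degree $\deg_G(v)$. Indeed, by~\eqref{eq:ratio sf} its numerator equals $-t^{\deg_G(v)}$ times a polynomial with constant term $1$, and its denominator has constant term $1$. Consequently $R_{U,v}^{[k]}(t) = 0$ whenever $\deg_G(v) > k$, which the algorithm can detect and return in constant time. Assuming instead that $\deg_G(v) \leq k$, I would let $u_1,\dots,u_\ell$ enumerate $N_G(v) \cap U$, set $U_i := U\setminus\{v,u_1,\dots,u_{i-1}\}$, recursively compute $R_{U_i,u_i}^{[k-\deg_G(v)]}(t)$ for each $i$, form the truncated product
\[
P(t) := \Bigl(\prod_{i=1}^{\ell}\bigl(1 + R_{U_i,u_i}^{[k-\deg_G(v)]}(t)\bigr)\Bigr)^{[k-\deg_G(v)]},
\]
invert $P(t)$ using Lemma~\ref{lemma:reciprocial}, and finally output $-t^{\deg_G(v)} \st{k} (1/P(t))^{[k-\deg_G(v)]}$. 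Correctness is immediate from~\eqref{eq:ratio recursion sf}. As in the proof of Proposition~\ref{prop:alg ind}, one also verifies that the set $V \setminus U$ grows by at most $\deg_G(v)+1$ per call, so membership queries used to compute $N_G(v)\cap U$ contribute only polynomial overhead per level.

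For the runtime bound, let $T(k)$ denote the worst-case time over all inputs $(G,U,v)$ with minimum degree at least $3$. Setting $d := \deg_G(v)$, the recurrence is
\[
T(k) \leq O(k^3) + d \cdot T(k-d), \qquad 3 \leq d \leq k,
\]
where the $O(k^3)$ absorbs the reciprocal from Lemma~\ref{lemma:reciprocial} together with the at most $\ell \leq d$ truncated polynomial multiplications. I would prove by induction that $T(k) \leq A \cdot k^{c} \cdot (e^{1/e})^{k}$ for a sufficiently large constant $A$ and constant $c$. The inductive step reduces to the key inequality
\[
d \cdot (e^{1/e})^{-d} \leq 1 \qquad \text{for every integer } d \geq 3,
\]
which is equivalent to $d^{1/d} \leq e^{1/e}$ and holds because the real function $x \mapsto x^{1/x}$ attains its unique maximum at $x = e$. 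This yields the claimed bound $T(k) = O(e^{k/e})$, suppressing polynomial factors in $k$.

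The main obstacle is not the algorithm itself, which is a direct adaptation of Algorithm~\ref{alg:indep_ratio}, but rather the tightness of the runtime analysis: the slack in $d^{1/d} \leq e^{1/e}$ is very small at $d=3$ (since $3^{1/3} \approx 1.4422$ while $e^{1/e} \approx 1.4447$), so the induction must carry polynomial correction factors large enough to absorb the $O(k^3)$ overhead at each node of the recursion tree without inflating the exponential base. A clean way to handle this is to pick the polynomial prefactor $k^{c}$ in the induction hypothesis so that $(1 - d/(e^{1/e})^d) \cdot A k^c (e^{1/e})^k$ dominates $O(k^3)$ uniformly for $3 \leq d \leq k$, which is possible precisely because $d \cdot (e^{1/e})^{-d} \leq 3 \cdot (e^{1/e})^{-3} < 1$ is bounded away from $1$ for every admissible $d$.
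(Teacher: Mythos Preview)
Your approach is essentially identical to the paper's: the same recursive algorithm driven by~\eqref{eq:ratio recursion sf}, the same early termination when $\deg_G(v)>k$, and the same inductive runtime bound hinging on $d\cdot e^{-d/e}<1$. Two small deviations are worth noting. First, the proposition carries no minimum-degree hypothesis, and none is needed: the function $d\mapsto d\,e^{-d/e}$ is maximized over all positive integers at $d=3$ (with value $3/e^{3/e}\approx 0.994<1$), so your own inequality already covers $d=1,2$. Second, the polynomial prefactor $k^{c}$ in your induction hypothesis is unnecessary and leaves you with $O(k^{c}e^{k/e})$ rather than the stated $O(e^{k/e})$; the paper simply takes $\tau(k)\le Ce^{k/e}$ and observes that the slack $(1-3/e^{3/e})\,Ce^{k/e}$, being exponential in $k$, dominates the $O(k^{3})$ overhead once $C$ is chosen large enough.
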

\begin{proof}
We first describe an algorithm to compute $R_{U,v}^{[k]}(x)$ for a subset $U$ of $V$ and $v\in U$. 
Here the algorithm has access to $V\setminus U$ so that it can determine $N_G(v)\cap U$ by simply removing any element from $N_G(v)$ that is contained in $V\setminus U$.

If $k=0$, then by definition we output $R_{U,v}^{[k]}(t)=0$. 
Otherwise, let $\{u_1,\dots,u_\ell\}=U\cap N_G(v)$. Then by~\eqref{eq:ratio recursion sf},
    \[
        R_{U,v}(t)=\frac{\left(-t^{\deg_G(v)}\right)Z_{U\setminus N_G[v]}(G,t)}{Z_{U\setminus \{v\}}(G,t)}=\frac{-t^{\deg_G(v)}}{\prod_{i=1}^\ell (1+R_{U\setminus\{v,u_1,\dots,u_{i-1}\},u_i} (G,t))},
    \]
    therefore
    \[
    R_{U,v}^{[k]}(t)=-t^{\deg_G(v)}\left(\frac{1}{1+F(t)}\right)^{[k-\deg_G(v)]},
    \]
    where $F(t)=\left(\prod_{i=1}^\ell (1+R_{U\setminus\{v,u_1,\dots,u_{i-1}\},u_i} (G,t))^{[k-\deg_G(v)]}\right)^{[k-\deg_G(v)]}-1$.

So in case $\deg_G(v)> k$, we  output $0$. 
Note that it only takes $O(k)$ time to decide whether $\deg_G(v)> k$ and to list the set $\{u_1,\ldots,u_\ell\}$ in case $\deg_G(v)\leq k$.
In that case, to compute $R^{[k]}_{U,v}(t)$ we need to compute $F(t)$ and use Lemma~\ref{lemma:reciprocial} to compute $R^{[k]}_{U,v}$ (which takes time $O((k-\deg_G(v))^3)$.)
To compute $F(t)$ we need to call the algorithm to compute $R^{[k-\deg_G(v)]}_{U\setminus\{v,u_1,\dots,u_{i-1}\},u_i}$ for $i=1,\ldots,\ell$ and multiply the resulting polynomials.

Let us now denote by $\tau(k)$ the time to compute $R^{[k]}_{U,v}(t)$. We will inductively show that $\tau(k)$ is bounded by $Ce^{k/e}$ for a large enough constant $C$.
First of all, we may assume that $\deg_G(v)\leq k$, otherwise the algorithm only takes linear time.
Then by the reasoning above we have
\begin{align}
        \tau(k)&\le O(k)+\deg_G(v)(O((k-\deg_G(v))^2)+ \tau(k-\deg_G(v))) + O((k-\deg_G(v))^3)\nonumber
        \\
        &\le O(k)+\max_{1\le d\le k} d O((k-d)^2)+ d C e^{(k-d)/e} + O((k-d)^3)\label{eq:maximization}
        \\
        &\le C\left(\tfrac{3}{e^{3/e}}e^{k/e}\right)+O(k^3)\le Ce^{k/e}\nonumber,
    \end{align}
    which is true if $C$ is chosen sufficiently large enough.
\end{proof}
\begin{remark}\label{rem:choice of constant}
Note that for graphs of (large) minimum degree $\delta\geq 3$ the exponent of $e$ can be replaced by something smaller than $k/e$, thereby reducing the running time. In particular, we can replace $k/e$ by $k(1+\eta)\log(\delta)/\delta$ for any $\eta>0$.
\end{remark}

We can now prove Theorem~\ref{thm:sink free}.
\begin{proof}[Proof of Theorem~\ref{thm:sink free}]
By Corollary~\ref{cor:zero-free sf} and Lemma~\ref{lem:sink free}, to compute a relative $\exp(\varepsilon)$ approximation to $\text{sfo}(G)$ it suffices to compute the Taylor polynomial of $\log(Z_{\textrm{sfo},V}(G;t))$ up to order $k=(17.66)\log(m/\varepsilon)$.
By the algorithm of Proposition~\ref{prop:alg sf} we can compute the ratios  $(R_{V,v})^{[k]}$ for $v\in V$ in time $O(ne^{k/e})=O(n(m/\varepsilon)^{6.5})$.
We then use~\eqref{eq:der log sink free} in combination with Lemma~\ref{lemma:reciprocial} to compute the Taylor polynomial of $\log(Z_{\textrm{sfo},V}(G;t))$ up to order $k$ in time $O(k^3)$. 
The overall running time is bounded by $O(n(m/\varepsilon)^{7})$, as desired.
\end{proof}
\begin{remark}
Following up on the previous remark.
The running time of the algorithm in Theorem~\ref{thm:sink free} can be improved to 
\[O\left(n(m/\varepsilon)^{(1+\eta)\tfrac{\log(\delta)}{\delta\log(2r_\delta)}}\right)\] 
for any positive $\eta$, for graphs of minimum degree at least $\delta$.
So for large $\delta$ we obtain a near linear (in terms of the number of vertices) time algorithm.
\end{remark}

\section{Chromatic Polynomial}\label{sec:chromatic}
In this section we will describe our algorithm for computing the low order coefficients of $P(G;z)=(-z)^{|V(G)|}\chi(G;-1/z)$ for a graph $G=(V,E)$.
We start with some definitions needed to provide a combinatorial interpretation of the coefficients of $P(G;z)$.

Fix a total order $<$ of the edges of $G$.
Recall that a set $F\subseteq E$ is called \emph{broken circuit free} (abbreviated BCF) if $(V,F)$ is a forest (i.e. contains no cycles) and for each edge $e\notin F$ such that $F\cup e$ contains a cycle $C$, the edge $e$ is not the largest edge of $C$.
Let $\mathcal{F}^{<}_G$ be the collection of BCF subgraphs of $G$. Then we know by Whitney's theorem~\cite{Whitney} that
\begin{equation}\label{eq:Whitney}
    P(G;z)=\sum_{F\in\mathcal{F}^{<}_G}z^{|F|}.
\end{equation}
Note that while $\mathcal{F}_G^{<}$ actually depends on the ordering $<$, but by definition, the polynomial $P(G;z)$ does not.

Our aim is to compute the low-order coefficients of the Taylor series of $\log(P(G;z)$ at $z=0$.
We know that
\begin{align}
    z\frac{d}{d z}P(G;z)&=\sum_{e\in E(G)}\underbrace{\sum_{e\in F\in\mathcal{F}^{<}_G} z^{|F|}}_{P^{e\textrm{ in}}(G;z)},
\end{align}
and therefore
\begin{equation}\label{eq:der to rational chrom}
    \left(z\frac{d}{d z} \log(P(G;z))\right)^{[m]}=\sum_{e\in E(G)}\left(\frac{P_e^\textrm{ in}(G;z)}{P(G;z)}\right)^{[m]}.
\end{equation}
As before we aim to show a procedure to compute for each edge $e\in E(G)$ the Taylor polynomial of
\[
    \frac{P^{e \textrm{in}}(G;z)}{P(G;z)}=\sum_{T: e\in T\in\mathcal{T}^<_G} z^{|T|}\frac{\sum_{F\in\mathcal{F}^<_{G\setminus V(T)}}z^{|F|}}{P(G;z)}=\sum_{T: e\in T\in\mathcal{T}^<_G} z^{|T|}\frac{P(G\setminus V(T);z)}{P(G;z)},
\]
here $\mathcal{T}_G^{<}$ denotes the collection of subtrees of $G$ that are BCF.
Let us define for any $S\subseteq V(G)$ and $v\in S$ the rational function 
\begin{equation}\label{eq:ratio chrom}
 R_{S,v}(z)=\frac{P(G[S-v];z)}{P(G[S];z)}.    
\end{equation}
Writing for $T\in\mathcal{T}^<_G$ such that $e\in T$, $V(T)=\{v_1,\ldots,v_\ell\}$. Then we can write $\frac{P(G\setminus V(T);z)}{P(G;z)}=\prod_{i=1}^{\ell}R_{V(G)\setminus \{v_1,\ldots,v_{i-1}\},v_i}$.
Therefore
\begin{equation} \label{eq:e is in}
 \left(\frac{P^{e \textrm{ in}}(G;z)}{P(G;z)}\right)^{[m]}=\sum_{\substack{T\in\mathcal{T}^<_G\\e\in T,|T|\leq m}}z^{|T|}\prod_{i=1}^{\ell}R^{[m-|T|]}_{V(G)\setminus \{v_1,\ldots,v_{i-1}\},v_i}(z).
 \end{equation}
To compute the sum above, we will need to be able to efficiently enumerate all BCF trees containing a specific edge or vertex of size at most $m$ and a procedure to compute the functions $R^{[m]}_{S,v}(z)$. 
The first task is handled by the next lemma, whose proof we postpone to the next subsection.

 \begin{lemma}\label{lem:enmurate BCF tree}
 Under Assumption~\ref{as:computation}, there exists an algorithm that on input of a graph $G=(V,E)$ of maximum degree $\Delta$ with a given total order of its edges, a vertex $v\in V$ (resp. an edge $f\in E$) and $m\in \mathbb{N}$ that enumerates all BCF trees $T\subset E$ that contain $v$ (resp. $f$) in time $O((e\Delta)^m\Delta^2 m^3)$.
\end{lemma}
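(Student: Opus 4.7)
The plan is a branch-and-bound enumeration over subtrees together with a BCF filter applied to each completed tree. I first reduce the edge case to the vertex case: to enumerate BCF subtrees containing an edge $f=uw$, I initialize the recursion with $T_0=\{f\}$ (so $V(T_0)=\{u,w\}$), while for the vertex case I start with $T_0=\emptyset$ and $V(T_0)=\{v\}$.

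For the core algorithm I maintain a partial tree $T\subseteq E$ containing the starting vertex and a set of excluded edges $X\subseteq E$, and use the \emph{boundary}
\[
B(T)=\{e\in E\setminus T : |e\cap V(T)|=1\}.
\]
The recursion \texttt{Enum}$(T,X)$ proceeds as follows: if $|T|=m$ or $B(T)\setminus X=\emptyset$, then test whether $T$ is BCF, and if so output $T$; otherwise, select the edge $e\in B(T)\setminus X$ of minimum index in the fixed edge ordering and recurse on $(T\cup\{e\},X)$ (only if $|T|<m$) and on $(T,X\cup\{e\})$. The BCF test at a leaf enumerates all edges $e'\in E\setminus T$ with both endpoints in $V(T)$, traces the unique path in $T$ between the endpoints of $e'$ in $O(m)$ time, and verifies that $e'$ is not the maximum edge of the resulting cycle. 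The number of candidate edges $e'$ is at most $\Delta|V(T)|=O(\Delta m)$, so the BCF test runs in $O(\Delta m^2)$ per leaf.

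The main analytic claim is that the recursion tree has $O((e\Delta)^m)$ leaves. The canonical branching rule ensures that each state $(T,X)$ has a unique predecessor, so each reachable $T$ corresponds to at most one leaf. Hence the number of leaves is bounded by the number of subtrees of $G$ of size at most $m$ containing the starting vertex (resp.\ edge), which by embedding a rooted subtree of size $k$ into the infinite rooted $(\Delta-1)$-ary tree and applying the Fuss--Catalan estimate $\tfrac{1}{(\Delta-1)k+1}\binom{\Delta k}{k}\le (e\Delta)^k$ is $O((e\Delta)^m)$. Since a binary recursion tree with $L$ leaves has $2L-1$ nodes, and per-node bookkeeping (updating $B(T)$ and selecting the minimum available boundary edge using the adjacency-list access guaranteed by Assumption~\ref{as:computation}) costs $O(\Delta m)$, the total running time is $O((e\Delta)^m\Delta m^2)$, which is comfortably within the claimed bound $O((e\Delta)^m\Delta^2 m^3)$.

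The main obstacle is justifying the tight $(e\Delta)^m$ count on the recursion: a naive bound of $2^{O(\Delta m)}$, based on counting branching decisions, is exponentially too large. The resolution rests on the canonical branching rule, which injects each subtree of $G$ into a distinct trace of the recursion, reducing the analysis to the classical combinatorial bound on rooted subtrees of the universal $(\Delta-1)$-ary tree.
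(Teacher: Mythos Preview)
Your proof is correct and achieves the stated time bound (indeed, the slightly sharper $O((e\Delta)^m\Delta m^2)$ you state). The approach, however, differs from the paper's. The paper builds an explicit spanning tree $\mathcal{T}_v$ of the ``meta-graph'' $\mathcal{Q}_v$ of all subtrees containing $v$, using the parent relation ``remove the largest leaf edge not isolating $v$''; it then traverses $\mathcal{T}_v$ to enumerate \emph{all} subtrees of size at most $m$ (Proposition~\ref{prop:enumerate tree}) and afterwards filters each by computing the set of broken edges in $O(\Delta m^2)$ time (Lemma~\ref{lem:compute broken}). You instead run a canonical include/exclude branch-and-bound on the minimum boundary edge and filter for the BCF property at the leaves. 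Both methods enumerate all subtrees before filtering and rely on the same $(e\Delta)^m$ subtree-count (the paper cites~\cite{Borgsetal}); what differs is the traversal mechanism. Your route is arguably more elementary, while the paper's explicit parent relation gives a cleaner description of the enumeration order.

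One point deserves tightening: your sentence ``each state $(T,X)$ has a unique predecessor, so each reachable $T$ corresponds to at most one leaf'' is a non-sequitur as written---unique predecessors hold in any recursion tree and do not by themselves give injectivity of the leaf-to-$T$ map. The actual argument is that if two root-to-leaf paths first diverge at a node where edge $e$ is selected, then one branch has $e$ in its final tree (edges are never removed) and the other does not (once $e\in X$ it is never again a candidate, and later it can only become a chord, not a boundary edge). This is standard and easy to supply, but it is the substantive step.
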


The next lemma handles the computation of the coefficient of the ratios.

\begin{lemma}\label{lem:compute R chrom}
Under Assumption~\ref{as:computation} there exists an algorithm that on input of a graph $G=(V,E)$ of maximum degree at most $\Delta$, a vertex $v\in V$ and $m\in \mathbb{N}$ computes
\[
 R_{V,v}^{[m]}(z)
\]
in time $O((3e\Delta)^mm^3)$.  
\end{lemma}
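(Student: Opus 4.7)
The plan is to mirror the hard-core model strategy of Proposition~\ref{prop:alg ind}: derive a recursion for $R_{S,v}(z)$ by decomposing BCF forests according to the tree component of $v$, enumerate the relevant BCF trees via Lemma~\ref{lem:enmurate BCF tree}, and invert the resulting truncated series via Lemma~\ref{lemma:reciprocial}. The algorithm will recurse on shrinking vertex sets $S\subseteq V$ (with top-level call $S=V$), so I would in fact prove the stronger statement that $R_{S,v}^{[m]}(z)$ can be computed in time $O((3e\Delta)^mm^3)$ for any $S$ and $v\in S$.

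First I would establish the combinatorial identity
\[
P(G[S];z)=\sum_{T\ni v,\,T\in\mathcal{T}^<_{G[S]}}z^{|T|}\,P(G[S\setminus V(T)];z),
\]
where the singleton $T=\{v\}$ (with $|T|=0$) contributes $P(G[S-v];z)$. The justification is that any BCF forest $F$ in $G[S]$ decomposes uniquely as $F=T\cup F'$, where $T$ is the connected component of $v$ in $F$ and $F'\subseteq E(G[S\setminus V(T)])$ collects the remaining edges; conversely, any such pair reassembles into a BCF forest because an edge $e\in E(G[S])\setminus F$ creates a cycle in $F\cup e$ only when both endpoints of $e$ lie in $V(T)$ or both lie in $S\setminus V(T)$ (cross-edges merely link components without forming cycles), so the broken-circuit condition splits into the independent conditions that $T$ is BCF in $G[V(T)]$ and $F'$ is BCF in $G[S\setminus V(T)]$. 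Dividing by $P(G[S-v];z)$ and telescoping the ratio $P(G[S\setminus V(T)];z)/P(G[S-v];z)$ along any ordering $V(T)=\{v_1=v,v_2,\ldots,v_\ell\}$ yields
\[
\frac{1}{R_{S,v}(z)}=1+\sum_{\substack{T\ni v,\,|T|\ge 1\\T\in\mathcal{T}^<_{G[S]}}}z^{|T|}\prod_{i=2}^{\ell}R_{S\setminus\{v_1,\ldots,v_{i-1}\},v_i}(z).
\]

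The recursive algorithm then runs as follows. Using (a subset-aware variant of) Lemma~\ref{lem:enmurate BCF tree}, I would enumerate all BCF trees $T$ in $G[S]$ with $v\in V(T)$ and $|T|\le m$. For each such $T$ of size $k$, I would recursively compute the $k$ polynomials $R^{[m-k]}_{S\setminus\{v_1,\ldots,v_{i-1}\},v_i}(z)$, multiply them together using $\st{m-k}$, multiply by $z^k$, and sum over $T$ to obtain the truncated series $G^{[m]}(z)$ on the right-hand side. Finally I would apply Lemma~\ref{lemma:reciprocial} to output $R_{S,v}^{[m]}(z)=\bigl(1/(1+G^{[m]}(z))\bigr)^{[m]}$. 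The base cases ($m=0$ or $|S|=1$) return $R_{S,v}^{[m]}(z)=1$.

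The hard part will be the runtime analysis. Let $\tau(m)$ denote the cost of computing $R_{S,v}^{[m]}(z)$. The enumeration contributes $O((e\Delta)^m\Delta^2m^3)$ by Lemma~\ref{lem:enmurate BCF tree}; using the standard bound of $(e\Delta)^k$ on the number of subtrees of size $k$ rooted at $v$ in a graph of maximum degree $\Delta$, each BCF tree of size $k$ then incurs $k\,\tau(m-k)+O(km^2)$ in recursive calls and $\st{m-k}$-multiplications; and the final reciprocal via Lemma~\ref{lemma:reciprocial} costs $O(m^3)$. Inductively assuming $\tau(j)\le C(3e\Delta)^jj^3$ for $j<m$, the dominant recursive contribution is bounded by
\[
\sum_{k=1}^{m}(e\Delta)^k\cdot kC(3e\Delta)^{m-k}(m-k)^3\le C(3e\Delta)^mm^3\sum_{k\ge 1}k\cdot 3^{-k}=\tfrac{3}{4}C(3e\Delta)^mm^3,
\]
and the remaining slack absorbs the enumeration and multiplication overheads once $C$ is chosen large enough. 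This closes the induction and yields the claimed bound $\tau(m)=O((3e\Delta)^mm^3)$.
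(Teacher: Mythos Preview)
Your proposal is correct and follows essentially the same route as the paper: the same recursion $1/R_{S,v}(z)=1+\sum_{T\ni v}z^{|T|}\prod R_{S\setminus\cdots,v_i}(z)$ obtained by splitting off the BCF tree component of $v$, the same enumeration via Lemma~\ref{lem:enmurate BCF tree} and inversion via Lemma~\ref{lemma:reciprocial}, and the identical inductive runtime bound using $\sum_{k\ge 1}k\cdot 3^{-k}=\tfrac{3}{4}$. Your combinatorial justification of the decomposition (that cross-edges never create cycles, so the broken-circuit condition factors over $T$ and $F'$) is in fact more explicit than what the paper writes out.
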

\begin{proof}
We first describe an algorithm for computing $R_{S,v}(z)$ for $S\subseteq V$.
For this we will need the following recursion for $R_{S,v}(z)$.
By definition we have,
    \begin{align*}
        1/R_{S,v}(z)&=\frac{P(G[S];z)}{P(G[S-v];z)}=\sum_{F\in \mathcal{F}^{<}_{G[S]}}\frac{z^{|F|}}{P(G[S-v];z)}\\
        &=\sum_{\substack{F\in \mathcal{F}^{<}_{G[S]}\\ v\notin V(F)}}\frac{z^{|F|}}{P(G[S-v];z)}+\sum_{\substack{F\in \mathcal{F}^{<}_{G[S]}\\ v\in V(F)}}\frac{z^{|F|}}{P(G[S-v];z)}\\
        &=1+\sum_{\substack{F\in \mathcal{F}^{<}_{G[S]}\\ v\in V(F)}}\frac{z^{|F|}}{P(G[S-v];z)}\\
        &=1+\sum_{\substack{T:T\in\mathcal{T}^{<}_{G[S]}\\ v\in V(T),|T|\ge 1}}z^{|T|}\frac{P(G[S-V(T)];z)}{P(G[S-v,z])}.
    \end{align*}
    This means that
    \begin{equation}\label{eq:from F to R chrom}
        R_{S,v}^{[m]}(z)=\left(\frac{1}{1+F^{[m]}(z)}\right)^{[m]},
    \end{equation}
    where
    \begin{equation}\label{eq:def F chrom}
        F^{[m]}(z)=\sum_{\substack{T:T\in\mathcal{T}^{<}_{G[S]}\\ v\in V(T),1\le |T|\le m}}z^{|T|}\left(\frac{P(G[S-V(T)];z)}{P(G[S-v];z)}\right)^{[m-|T|]}.
    \end{equation}
Denote for a tree $T$ appearing in~\eqref{eq:def F chrom} $\{v,v_1,\dots,v_\ell\}=V(T)$. Then we can express
    \begin{equation}\label{eq:prod chrom}
        \left(\frac{P(G[S-\{v_1,\dots,v_\ell\}];z)}{P(G[S-v];z)}\right)^{[m-|T|]}=\left(\prod_{i=1}^\ell \left(R_{S\setminus\{v,v_1,\dots,v_{i-1}\},v_i }(z)\right)^{[m-|T|]}\right)^{[m-|T|]}.
    \end{equation}

Now let $\tau(m)$ denote the time needed to calculate $R_{S,v}^{[m]}$ for any $S\subseteq V(G)$ such that $|V\setminus S|\leq m$ and $v\in S$.
Our aim is to inductively show that $\tau(m)=C(3e\Delta)^m m^3$ for some constant $C>0$.

To compute  $R_{S,v}^{[m]}$ from $F^{[m]}(z)$ takes time $O(m^3)$ by Lemma~\ref{lemma:reciprocial}.
To compute $F^{[m]}(z)$ we need to enumerate all BCF-trees $T$ rooted at $v$ contained in $G[S]$ with at most $m$ edges and compute $\left(\frac{P(G[S-V(T)];z)}{P(G[S-v];z)}\right)^{[m-|T|]}$.
The latter can be computed as a product of $|T|$ ratios with $|T|$ calls to the algorithm, giving a running time of $O(|T|(m-|T|)^2)\tau(m-|T|){|T|}$.
The enumeration of the BCF-trees can be done in time $O((e\Delta)^m\Delta^2m^3)$ by the previous lemma.
By~\cite{Borgsetal}*{Lemma 2.1} the number of trees rooted at a $v$ with $k$ edges is at most $(e\Delta)^{k}$.
Putting this all together, we obtain the following bound

    \begin{align*}
    \tau(m)&\le O(m^3) + (e\Delta)^m O(\Delta^2 m^3)+\sum_{k=1}^{m} (e\Delta)^k \left( \tau(m-k) k +k O((m-k)^2)\right)\\
        &\le O(m^3)+(e\Delta)^m O(\Delta^2 m^3)+C(3e\Delta)^m\sum_{k=1}^{m}3^{-k}(m-k)^3k+\sum_{k=1}^{m}k(e\Delta)^k O((m-k)^2)  \\
        &\le O(m^3)+(e\Delta)^m O(\Delta^2 m^3)+\tfrac{3}{4}C(3e\Delta)^mm^3\\
        &\le C(3e\Delta)^mm^3
    \end{align*}
 where we use induction in the second inequality; the third inequality uses that $\sum_{k\geq 1}kx^k=x\tfrac{d}{dx}\tfrac{1}{1-x}=\tfrac{x}{(1-x)^2}$ for $x=1/3$ and $x=\tfrac{1}{e\Delta}$ and  the last inequality is true provided $C$ is sufficiently large.
 This proves the lemma.


\end{proof}

We can now provide a proof of Theorem~\ref{thm:chrom}
\begin{proof}[Proof of Theorem~\ref{thm:chrom}]
The first $m$ coefficients of $\log(P(G;z))$ can easily be computed from the coefficients of $\frac{d}{dz}\log(P(G;z))$ in linear time.
By~\eqref{eq:der to rational chrom} we need to compute the first $m$ coefficients of $\frac{P^{e \textrm{ in}}(G;z)}{P(G;z)}$ for each edge $e$ of $G$.
By~\eqref{eq:e is in} this computation boils down to enumerating all BCF trees $T$ of size at most $m$ that contain $e$ which can be done in time $O((e\Delta)^m\Delta^2m^3)$ by Lemma~\ref{lem:enmurate BCF tree}. 
For each BCF tree $T$ with vertex set $\{v_1,\ldots,v_\ell\}$ we need to compute the terms in the product
\[
\prod_{i=1}^\ell \left(R_{S\setminus\{v_1,\dots,v_{i-1}\},v_i }(z)\right)^{[m-|T|]}
\]
and the actual product, which can be done in time $O((3e\Delta)^{m-|T|}{(m-|T|)}^3)$ by Lemma~\ref{lem:compute R chrom}.
All together, this takes $O((3e\Delta)^m m^3)$ time, reasoning as in the proof of Lemma~\ref{lem:compute R chrom}.
\end{proof}




\subsection{Enumeration of trees}
In this section we will describe an algorithm that enumerates bounded size trees containing a specific vertex. 
In particular, we will provide a proof of Lemma~\ref{lem:enmurate BCF tree} at the end of this section.

Let $G=(V,E)$ be a graph of maximum degree $\Delta$ and $v$ to be a vertex of $G$. We assume that $E$ is equipped with a total order $<$.
Let $\mathcal{Q}_v$ be the graph on the collection of subtrees of $G$ containing $v$, where two trees are connected if and only if they differ by exactly $1$ edge. Observe that in this graph the $m$th neighborhood of the single vertex tree $\{v\}$ is exactly the set of trees of at most $m$ edges. 
To do an efficient enumeration of this neighborhood in $\mathcal{Q}_v$, we will construct a spanning tree $\mathcal{T}_v$ of $\mathcal{Q}_v$ on which we can perform efficiently a graph search algorithm.

First let us describe the construction of $\mathcal{T}_v$. 
For each subtree $T\in V(\mathcal{Q}_v)$ let us define a unique identifier: if $T$ has exactly $1$ edge $e$, then $ID(T)=(e)$. 
If $T$ consist of $m\ge 2$ edges, then $ID(T)=(e_1,\dots,e_m)$, where $e_m$ is the largest leaf edge of $T$, such that $v$ is not an isolated vertex of $T-e_m$ and $(e_1,\dots,e_{m-1})=ID(T-e_m)$.
By definition $ID(T-e_m)$ is exactly the sequence formed by the first $m-1$ elements of the sequence $ID(T)$; we call $T-e_m$ to be the parent of $T$. 
In particular, if $T'$ is a parent of $T$, then $(T,T')\in E(\mathcal{Q}_v)$. 
Thus, if we only take the edges, that connects every tree with their parent, then we obtain a spanning tree $\mathcal{T}_v$ of $\mathcal{Q}_v$.

The following lemma shows that we can perform graph search on the tree $\mathcal{T}_v$ efficiently.
\begin{lemma}\label{lem:compute all trees}
Under Assumption~\ref{as:computation}, there exists an algorithm that on input of a graph $G$ of maximum degree at most $\Delta$ equipped with a total order of its edges, $v\in V(G)$ and $T\in V(\mathcal{Q}_v)$,
\begin{itemize}
\item computes $ID(T)$ in time $O(|T|^2)$;
\item computes  $\partial V(T)$, the set of edges leaving $V(T)$ in the graph $G$, in time $O(\Delta |T|)$;
\item computes $\{ID(T')~|~T'\in\mathcal{Q}_v, \textrm{$T$ is the parent of $T'$}\}$ in time $O(\Delta |T|)$.
\end{itemize}
\end{lemma}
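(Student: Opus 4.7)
The plan is to handle the three items of the lemma in order, with the recursive definition of $ID(T)$ making the first two straightforward and the third requiring a short case analysis. Throughout, I will call a leaf edge $e$ of a tree \emph{admissible} if $v$ is not isolated after $e$ is removed.

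For item (1), the definition of $ID(T)$ is itself a recipe: find the largest admissible leaf edge $e_{|T|}$ of $T$, record it, and recurse on $T-e_{|T|}$. Since $v$ becomes isolated only when the removed edge is the unique edge of $T$ incident to $v$, admissibility is a constant-time check once the degree sequence is available. A single iteration (scan the current leaf edges, select the largest admissible one, delete it, update degrees) costs $O(|T|)$, and we perform at most $|T|$ iterations, for a total of $O(|T|^2)$.

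For item (2), I would first build an indicator array marking $V(T)$ in $O(|T|)$ time, and then iterate over each $u\in V(T)$ and each neighbor $w$ of $u$ in $G$. Assumption~\ref{as:computation} gives the neighbors in $O(1)$ per neighbor, while the indicator gives $O(1)$ membership queries for $V(T)$, so the total work is $O(\Delta|T|)$.

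For item (3), a child of $T$ in $\mathcal{T}_v$ is a tree $T'=T+e'$ with $e'\in\partial V(T)$ such that $ID(T')$ equals $ID(T)$ followed by $e'$, equivalently such that $e'$ is the largest admissible leaf edge of $T'$. For each $e'=\{a,b\}$ with $a\in V(T)$ and $b\notin V(T)$, the edge $e'$ is automatically a leaf edge of $T'$ and is admissible (once $T$ has at least one edge, the base case $T=\{v\}$ being trivial). So the check reduces to: every other admissible leaf edge of $T'$ is smaller than $e'$. Precomputing in $O(|T|)$ time the degree sequence, the leaf-edge set $L_T$, and both the largest admissible leaf edge $e_m$ and the second largest of $T$, I can carry out this check in $O(1)$ per candidate $e'$: if $a$ is not a leaf of $T$ the leaf-edge set of $T'$ is $L_T\cup\{e'\}$, while if $a$ is a leaf of $T$ it is $(L_T\setminus\{f_a\})\cup\{e'\}$, where $f_a$ is the unique leaf edge of $T$ at $a$. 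In either case the threshold that $e'$ must exceed is one of the two precomputed maxima, with a small additional tweak when $a=v$. Since each child is output as $ID(T)$ with $e'$ appended, and the prefix $ID(T)$ is shared, the total work for the enumeration is $O(\Delta|T|)$.

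The main obstacle will be the parent-check in item (3): the leaf-edge set of $T+e'$ can lose one element of $L_T$ when $e'$ attaches to a leaf of $T$, and admissibility interacts with whether the attachment vertex happens to be $v$ itself. Maintaining both the first and second maxima of admissible leaf edges of $T$, together with a short case split on ``$a$ is/isn't a leaf of $T$'' and ``$a=v$ or not,'' brings each check down to constant time and yields the $O(\Delta|T|)$ bound claimed for items (2) and (3).
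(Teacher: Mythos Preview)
Your proof is correct and follows essentially the same approach as the paper's own argument: both compute $ID(T)$ by iterated leaf removal, both obtain $\partial V(T)$ by scanning neighbors of $V(T)$, and both characterize children of $T$ by comparing a candidate boundary edge $e'$ against the largest (and, in the one exceptional case, second largest) admissible leaf edge of $T$. Your case analysis for item~(3) is in fact more careful than the paper's terse version---you explicitly track how admissibility and the leaf-edge set change when $e'$ attaches at a leaf versus a non-leaf and when $a=v$---whereas the paper compresses this into the single line ``either $e_{m+1}$ is larger than $e_m$ or $e_{m+1}$ is incident to $e_m$ and larger than the second largest leaf.''
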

\begin{proof}
    The first part follows from the definition, since finding all leaf edges and vertices of a tree takes $O(m)$ time. Therefore the choice of  $e_m$ can be made in time at most $O(m)$. 

    For the second part, we have to perform a simple DFS algorithm in $G$ from $v$ with the constraint that we are only allowed to traverse edges of $T$. 
    Each time we identify edges not in $T$ and that are not induced by $V(T)$ we know that these belong to $\partial V(T)$.
    In this process, we will identify all edges $e\in E(G)$, such that $e\in \partial V(T)$ and this takes at most $O(m\Delta)$ time.

    Now let us prove the third part. If $T'$ is a child of $T$, then it means that the last edge $e_{m+1}$ of $ID(T')$ is in $\partial V(T)$. Moreover, either $e_{m+1}$ is larger than $e_m$ or $e_{m+1}$ is incident to $e_m$ and larger than the second largest leaf of $E(T)$.
    Therefore, examining one by one the edges in $\partial V(T)$, which has size at most $\Delta m$, we can identify all children of $T$.
\end{proof}

As a corollary of this lemma we obtain the following proposition.

\begin{proposition}\label{prop:enumerate tree}
  Under Assumption~\ref{as:computation}, there exists an algorithm that on input of a graph $G=(V,E)$ of maximum degree $\Delta$ with a given total order of its edge,  a vertex $v\in V$ (resp. an edge $f\in E$) and $m\in \mathbb{N}$ that enumerates all trees $T\subset E$ that contain $v$ (resp. $f$) in time $O((e\Delta)^m\Delta m)$ time.  
\end{proposition}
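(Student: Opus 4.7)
The plan is to exploit the spanning-tree structure $\mathcal{T}_v$ of $\mathcal{Q}_v$ together with Lemma~\ref{lem:compute all trees} to do a breadth-first (or depth-first) traversal starting at the single-vertex tree $\{v\}$, halting once we reach depth $m$. Because in $\mathcal{T}_v$ the root is $\{v\}$, a tree $T$ appears at depth $|T|$ in $\mathcal{T}_v$, so traversing to depth $m$ enumerates exactly the subtrees containing $v$ with at most $m$ edges. Concretely, the algorithm maintains a queue (or stack) of trees, each stored via its $ID$-sequence and the vertex set $V(T)$; when it pops $T$, it uses the third bullet of Lemma~\ref{lem:compute all trees} to produce all children of $T$ in $\mathcal{T}_v$ and pushes those with $|T|+1\le m$ onto the queue.

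To bound the running time I would argue as follows. For each tree $T$ visited, computing $\partial V(T)$ and the list of children costs $O(\Delta|T|)=O(\Delta m)$ by Lemma~\ref{lem:compute all trees}. The number of subtrees of $G$ of size exactly $k$ that contain $v$ is at most $(e\Delta)^k$ by \cite{Borgsetal}*{Lemma 2.1}, so the total number of trees visited is at most $\sum_{k=0}^{m}(e\Delta)^k = O((e\Delta)^m)$ (the geometric series is dominated by its last term since $e\Delta\ge e>1$). Multiplying yields the claimed bound $O((e\Delta)^m\Delta m)$.

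For the edge variant, where we want to enumerate all subtrees containing a prescribed edge $f=uw$, I would simply initialize the traversal at the two-vertex tree $T_0=\{f\}$ instead of at $\{v\}$, and use the analogous spanning tree $\mathcal{T}_{T_0}$ on subtrees of $G$ that contain $T_0$. The same three operations of Lemma~\ref{lem:compute all trees} work verbatim (the definition of $ID(T)$ is adapted so that $T_0$ is the minimal tree and children are obtained by appending edges of $\partial V(T)$ according to the leaf/ordering rule). Every subtree of size $k\ge 1$ containing $f$ contains at least one of $u,w$, so the same counting bound $(e\Delta)^k$ applies up to a factor of $2$, giving the same overall running time.

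The only real subtlety is making sure the parent/child relation actually defines a spanning tree of $\mathcal{Q}_v$, so that the BFS visits each subtree exactly once; this is however already packaged into the construction of $\mathcal{T}_v$ described before Lemma~\ref{lem:compute all trees} (every $T$ with $|T|\ge 1$ has a unique parent via the largest admissible leaf edge), so no further work is required here. The main thing to watch out for is to avoid regenerating the same tree multiple times from different parents, which is precisely why we restrict to the edges of $\mathcal{T}_v$ rather than all of $\mathcal{Q}_v$; this is automatic once children are defined as in the third bullet of Lemma~\ref{lem:compute all trees}.
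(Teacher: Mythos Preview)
Your proposal is correct and follows essentially the same approach as the paper: traverse the spanning tree $\mathcal{T}_v$ of $\mathcal{Q}_v$ restricted to trees with at most $m$ edges, using Lemma~\ref{lem:compute all trees} to generate children at cost $O(\Delta m)$ per node, and bound the number of nodes by $O((e\Delta)^m)$ via~\cite{Borgsetal}*{Lemma 2.1}. The edge variant is likewise handled the same way (initialize at the one-edge tree $\{f\}$ and use the analogous spanning tree $\mathcal{T}_f$), matching the paper's treatment.
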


\begin{proof}
Let us first consider the case that the trees must contain $v$.
Let $\mathcal{T}_{v,\le m}$ be the subtree of $\mathcal{T}_v$, that is spanned by the trees of at most $m$ edges. 
By the previous lemma, at any vertex of $\mathcal{T}_{v,\le m}$ we can compute the possible neighbors in $O(\Delta m)$ time.
Since the size of $\mathcal{T}_{v,\le m}$ is at most $(e\Delta)^m$ by~\cite{Borgsetal}*{Lemma 2.1}, we can therefore enumerate all vertices of $\mathcal{T}_{v,\le m}$ in time $(e\Delta)^mO(\Delta m)$, as desired.

In case our trees must contain the edge $f$ the argument is similar. We just replace $\mathcal{Q}_v$ by $\mathcal{Q}_f$, the graph on the collection of subtrees of $G$ that contain $f$ and appropriately define $\mathcal{T}_f$. 
We leave the remaining details to the reader. 
\end{proof}

For a graph $G=(V,E)$ of with a given total ordering of its edges $<$ and a forest $F\subset E$ we call an edge $f\in E\setminus F$ \emph{broken} if $F\cup f$ contains a cycle $C$ and $f$ is the largest edge in that cycle. We denote the collection of broken edges of $F$ by $B(F).$

\begin{lemma}\label{lem:compute broken}
Under Assumption~\ref{as:computation}, there exists an algorithm that on input of a graph $G$ of maximum degree at most $\Delta$ equipped with a total order of its edges, $v\in V(G)$ and any $T\in \mathcal{Q}_v$, 
\begin{itemize}
\item computes $E_G(V(T))$, the set of induced edges by $V(T)$ in the graph $G$, in time $O(\Delta |T|)$;
\item computes $B(T)$, the set of broken edges of $T$, in time $O(\Delta |T|^2)$.
\end{itemize}
\end{lemma}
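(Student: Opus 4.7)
The plan is to handle each bullet separately, relying only on the adjacency list access guaranteed by Assumption~\ref{as:computation} together with a characteristic array for $V(T)$.

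For the first bullet, I would first build a boolean array $\chi:V\to\{0,1\}$ marking the vertices of $V(T)$, which can be done in $O(|T|)$ time by traversing the tree $T$ (recall $|V(T)|=|T|+1$). Then for every vertex $u\in V(T)$ I iterate through the neighbors of $u$ in $G$ (at most $\Delta$ of them by Assumption~\ref{as:computation}) and record the edge $uw$ whenever $\chi(w)=1$. This visits each induced edge at most twice and takes $O(\Delta|T|)$ time overall. A simple bookkeeping trick (for example, recording $uw$ only when the endpoint appearing later in the DFS ordering of $T$ is scanned) removes the double counting without affecting the asymptotic running time.

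For the second bullet, I first invoke the procedure above to obtain $E_G(V(T))$ in $O(\Delta|T|)$ time; the key observation is that any $f\in B(T)$ must satisfy $f\in E_G(V(T))\setminus E(T)$, since an edge with an endpoint outside $V(T)$ cannot close a cycle with $T$. There are at most $O(\Delta|T|)$ such candidate edges. For each candidate $f=uw$, the unique cycle $C_f\subseteq T\cup\{f\}$ consists of $f$ together with the unique $u$-$w$ path in the tree $T$, which I can extract in $O(|T|)$ time by running a DFS in $T$ from $u$ until $w$ is discovered (using parent pointers to recover the path). While tracing this path I can simultaneously track the largest edge encountered; then $f\in B(T)$ iff this maximum is still smaller than $f$ in the given order $<$. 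Summing over all candidates gives the claimed bound $O(\Delta|T|^2)$.

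No individual step is really an obstacle here; the main thing to be careful about is ensuring that both (a) the neighborhood scans use the adjacency list model correctly, so that the per-vertex cost is $O(\Delta)$ rather than $O(|V|)$, and (b) the path extraction in $T$ runs in $O(|T|)$ rather than $O(|V|)$, which is why it is important to keep $T$ itself (not all of $G$) as the search graph when finding $C_f$. With these implementation details in place, the stated time bounds follow immediately.
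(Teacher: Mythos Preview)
Your proposal is correct and follows essentially the same approach as the paper. The paper's proof is even terser: for the first item it simply points to the DFS used in Lemma~\ref{lem:compute all trees}, and for the second item it finds the cycle for each non-tree induced edge via the lowest common ancestor in $T$ (equivalent to your DFS path extraction) and then compares $f$ with the edges along that path, giving the same $O(|T|)$ cost per candidate edge. One small caveat: writing ``build a boolean array $\chi:V\to\{0,1\}$'' literally costs $O(|V|)$ to initialize; to stay within $O(\Delta|T|)$ you should say you use a hash set for $V(T)$ (or reuse a pre-allocated array and only reset the $O(|T|)$ touched entries), which is the implicit convention both you and the paper are relying on.
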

\begin{proof}
The first item follows by a DFS algorithm just as in the proof of Lemma~\ref{lem:compute all trees}.
The second item follows by running comparing each edge that is added in the procedure of determining all induced edges with all other edges in the cycle that is formed (which is found by finding the first common ancestor of the two endpoints of the edge). This clearly takes time at most $O(|T|)$ for each such edge.
\end{proof}

Lemma~\ref{lem:enmurate BCF tree} now follows by combining Proposition~\ref{prop:enumerate tree} with the previous lemma.

\section{Graph homomorphisms}\label{sec:graph hom}
Throughout we fix $A$ to be an $q\times q$ symmetric matrix. 
For a graph $G=(V,E)$ we write
\[
H(G;x)=q^{-|V|}\hom(G,J+x(A-J)).
\]
In this section we will describe our algorithm for computing the low-order coefficients of $\log(H(G,x))$.

We first need a few definitions to realize $H(G,x)$ as a forest generating function.
Consider a total ordering $<$ on the edges of $G$. 
Let for a tree $T\subseteq E$
\[
w_T=w_T(x)=q^{-(|T|+1)}\sum_{\phi:V(T)\to[q]}x^{|T|}\prod_{ij\in T}(A-J)_{\phi(i),\phi(j)}\prod_{ij\in E(V(T))_T}(J+x(A-J))_{\phi(i),\phi(j)},
\]
where $E(S)_T$ denote the collection of broken edges of $T$ in the graph $G[S]$.
Next define
\[
F_G(w)=F_G((w_T)_{T \text{ tree}})=\sum_{\substack{F\in \mathcal{F}_G}}\prod_{T\in \mathcal{C}(F)}w_T,
\]
where $\mathcal{F}_G$ denotes the collection of forests of $G$ and $\mathcal{C}(F)$ denotes the collection of non-trivial components of $F$ (i.e. components with at least one edge).

\begin{lemma}\label{lem:forest graph hom} For any graph $G$ with a total ordering on its edges we have
\[
F_G(w_T)=H(G;x),
\]
where $F_G(w_T)$ is defined above.
\end{lemma}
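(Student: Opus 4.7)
The plan is to expand both $H(G;x)$ and $F_G(w)$ as sums indexed by edge subsets $S \subseteq E$ decorated by colorings $\phi\colon V(S)\to [q]$ carrying $(A-J)$ weights, and then to match corresponding terms via a minimum-spanning-tree bijection.

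First I would use the identity $(J + x(A-J))_{ij} = 1 + x(A-J)_{ij}$ to distribute the edge product inside $H(G;x) = q^{-|V|}\sum_{\phi\colon V\to[q]}\prod_{ij\in E}(J+x(A-J))_{\phi(i),\phi(j)}$ and sum $\phi$ over the $|V|-|V(S)|$ vertices not incident to $S$, obtaining
\begin{equation*}
H(G;x) \;=\; \sum_{S\subseteq E} q^{-|V(S)|}\, x^{|S|} \sum_{\phi\colon V(S)\to [q]} \prod_{ij\in S}(A-J)_{\phi(i),\phi(j)}.
\end{equation*}
Expanding in the same way each factor $\prod_{ij\in E(V(T))_T}(1+x(A-J))$ inside $w_T$ as a sum over subsets $S'_T \subseteq E(V(T))_T$ yields an analogous expansion of $F_G(w) = \sum_{F\in\mathcal{F}_G}\prod_{T\in\mathcal{C}(F)}w_T$ as a sum over pairs $(F, (S'_T)_T)$, in which the $T$-th factor carries $q^{-|V(T)|} x^{|T|+|S'_T|}\sum_{\phi_T\colon V(T)\to[q]}\prod_{ij\in T\cup S'_T}(A-J)_{\phi_T(i),\phi_T(j)}$.

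The crux is then to set up a bijection between edge subsets $S\subseteq E$ and these pairs $(F,(S'_T)_T)$. In the forward direction, decompose $S$ into its (edge-)connected components $C_1,\ldots,C_k$, let $T_i$ be the minimum spanning tree of $C_i$ with respect to the edge ordering inherited from $G$, set $S'_i := C_i\setminus T_i$, and take $F := \bigsqcup_i T_i$. In the reverse direction, set $S := \bigsqcup_T (T\cup S'_T)$, which makes sense because the components of $F$ have pairwise disjoint vertex sets, so each $T\cup S'_T$ is exactly one connected component of $S$. The key point making this a bijection is the following claim: for any connected edge-subgraph $C$ of $G$ there is a \emph{unique} spanning tree $T$ of $C$ such that every edge of $C\setminus T$ is broken with respect to $T$ in $G[V(T)]$, and this tree is the MST of $C$. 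Existence is the cycle property of MSTs (every non-tree edge is the maximum edge in its fundamental cycle), while uniqueness follows from the strictness of the total edge ordering.

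Under this bijection the corresponding terms of the two expansions agree: since $V(S) = \bigsqcup_T V(T)$, the factor $q^{-|V(S)|}$ splits as $\prod_T q^{-|V(T)|}$, a coloring $\phi\colon V(S)\to[q]$ corresponds canonically to a family $(\phi_T\colon V(T)\to[q])_T$, and $|S|=\sum_T(|T|+|S'_T|)$, so the product over components recombines into the single $S$-indexed term on the $H(G;x)$ side. The main obstacle I anticipate is verifying the MST characterization used for the bijection, though this is a classical consequence of the cycle property of minimum spanning trees under a strict edge ordering.
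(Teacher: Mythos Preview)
Your argument is correct and coincides with the paper's proof. The paper expands $H(G;x)$ over edge subsets exactly as you do, and then invokes Penrose's interval partition of connected subgraphs $\{F : T\subseteq F\subseteq T\cup B(T)\}$ to regroup the sum; your MST bijection is precisely the standard proof of that partition (the paper itself remarks that Penrose's lemma ``can be proved by assigning a connected subgraph to its minimum weight spanning tree''), so the two approaches are the same at the level of ideas.
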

We postpone the proof of the lemma to the end of this section, noting that it can be proved along very similar lines as~\cite{bencs2024approximatingvolumetruncatedrelaxation}*{Proposition 2.3}.

Now let us express the Taylor series of $x\frac{d}{d x}\log(H(G;x))$. We have that
\begin{align*}
x\frac{d}{d x}H(G;x)&=\sum_{T\in\mathcal{T}_G}\underbrace{(x \frac{d}{d x} w_T(x)) \cdot  H(G-V(T);x)}_{H_T(G;x)}
\\
&=\sum_{e\in E}\underbrace{\sum_{\substack{T\in\mathcal{T}_G\\ e=\min(T)}}(x \frac{d}{d x} w_T(x)) \cdot  H(G-V(T);x)}_{H^{e}(G;x)}.
\end{align*}
 Therefore,
\[
    \left(x\frac{d}{d x}\log(H(G;x))\right)^{[m]}=\sum_{\substack{T\in\mathcal{T}_G\\ 1\le|T|\le m}}\left(\frac{H_T(G;x)}{H(G;x)}\right)^{[m]}=\sum_{e\in E }\sum_{\substack{T\in\mathcal{T}_G\\ |T|\le m\\ e=\min(T)}}\left(\frac{H_T(G;x)}{H(G;x)}\right)^{[m]}.
\]
As before we aim to describe a procedure to compute 
\[
    \left(\frac{H_T(G;x)}{H(G;x)}\right)^{[m]}=(x\tfrac{d}{d x}w_T(x))^{[m]}\left(\frac{H(G-V(T);x)}{H(G;x)}\right)^{[m-|T|]}.
\]
Note that in this equation we used the fact that $x\tfrac{d}{d x}\left(w_T(x)\right)=O(x^{|T|})$.
For a tree $T$ on $k$ vertices the contribution $w_T(x)^{[m]}$ can be computed in time $O(\Delta k^2)+O(q^{k}(\Delta k)(m-k))=O(q^k \Delta k m)$, since if $k>m$ we have $w_T(x)^{[m]}=0$ otherwise after calculating the broken edges we take all possible $V(T)\to [q]$ colorings and compute the corresponding contributions.  

Let us define for any $S\subseteq V(G)$ and $v\in S$ the rational function 
\begin{equation}\label{eq:ratio graph hom}
 R_{S,v}(x)=\frac{H(G[S-v],x)}{H(G[S],x)}.   
\end{equation} 
Then for any $\{v_1,\ldots,v_\ell\}\subseteq S$ we have
\[
\frac{H(G[S-\{v_1,\dots,v_\ell\}];x)}{H(G[S];x)}=\prod_{i=1}^\ell R_{S\setminus\{v_1,\dots,v_{i-1}\},v_i }(x).
\]
This means that for any tree $T$ if we let $V(T)=\{v_1,\dots,v_\ell\}$, then
\[\left(\frac{H^e(G;x)}{H(G;x)}\right)^{[m]}=\sum_{\substack{T\in\mathcal{T}_G\\|T|\le m\\e=\min(T)}}\left(x\tfrac{d}{d x}w_T(x)\right)^{[m]}\prod_{i=1}^\ell R^{[m-|T|]}_{V(G)\setminus \{v_1,\dots,v_{i-1}\},v_i}(x).\]
By Proposition~\ref{prop:enumerate tree} we can enumerate the trees involved in the summation in time at most $O((e\Delta)^m\Delta m)$. Using the next lemma we obtain the desired algorithm.

\begin{lemma}\label{lem:alg graph hom}
  Under Assumption~\ref{as:computation} there exists an algorithm that on input of a graph $G=(V,E)$ of maximum degree at most $\Delta$ with a given total ordering of its edges, a vertex $v\in V$ and $m\in \mathbb{N}$ computes
\[
 R_{V,v}^{[m]}(x)
\]
in time $O(((q+1)e\Delta)^mm^3)$.
\end{lemma}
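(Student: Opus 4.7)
The plan is to mimic closely the strategy used for the chromatic polynomial in Lemma~\ref{lem:compute R chrom}, replacing the Whitney forest expansion~\eqref{eq:Whitney} by the weighted forest expansion provided by Lemma~\ref{lem:forest graph hom}. As there, I will actually describe and analyse an algorithm that computes $R_{S,v}^{[m]}(x)$ for every $S\subseteq V$ with $v\in S$, assuming the algorithm has access to $V\setminus S$ so that induced neighborhoods can be read off from the adjacency lists in time $O(\Delta)$ per vertex.

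The key algebraic identity is a recursive formula for $R_{S,v}$. Starting from $1/R_{S,v}(x)=H(G[S];x)/H(G[S-v];x)$ and using Lemma~\ref{lem:forest graph hom} to expand the numerator as a sum over $F\in\mathcal{F}_{G[S]}$ of $\prod_{T\in\mathcal{C}(F)}w_T$, I split the sum according to whether $v$ is isolated in $F$ (in which case removing $v$ gives a bijection onto forests of $G[S-v]$ and the ratio contributes exactly $1$) or $v$ belongs to a non-trivial component $T$. In the latter case the remaining forest lives in $G[S\setminus V(T)]$ and contributes $H(G[S\setminus V(T)];x)$. This yields
\[
\frac{1}{R_{S,v}(x)}=1+\sum_{\substack{T\in\mathcal{T}_{G[S]}\\ v\in V(T),\,|T|\ge 1}} w_T(x)\,\frac{H(G[S\setminus V(T)];x)}{H(G[S-v];x)}.
\]
Writing $V(T)\setminus\{v\}=\{v_1,\dots,v_\ell\}$ and telescoping the inner ratio as $\prod_{i=1}^{\ell}R_{S\setminus\{v,v_1,\dots,v_{i-1}\},v_i}(x)$, Lemma~\ref{lemma:reciprocial} then allows me to compute $R_{S,v}^{[m]}(x)$ as $(1/(1+F^{[m]}(x)))^{[m]}$, where $F^{[m]}(x)$ is the truncation of the above sum (restricted to trees with at most $m$ edges, since $w_T(x)=O(x^{|T|})$) with each ratio replaced by its $(m-|T|)$th Taylor polynomial.

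The algorithm is then straightforward: enumerate all BCF-free trees rooted at $v$ in $G[S]$ with at most $m$ edges using Proposition~\ref{prop:enumerate tree}; for each such $T$ compute $w_T(x)^{[m]}$ in time $O(q^{|V(T)|}\Delta |V(T)| m)$ as recorded in the text preceding the lemma; recursively compute the $\ell\le|T|$ ratios; multiply everything and apply Lemma~\ref{lemma:reciprocial}. Let $\tau(m)$ denote the time to compute $R_{S,v}^{[m]}$ (uniformly in $S$). By induction on $m$ I aim to show $\tau(m)\le C((q+1)e\Delta)^m m^3$ for a suitably large constant $C=C(q,\Delta)$. Using the bound $(e\Delta)^k$ on the number of trees rooted at $v$ with $k$ edges from~\cite{Borgsetal}*{Lemma 2.1}, the recursion becomes
\[
\tau(m)\le O(m^3)+\sum_{k=1}^{m}(e\Delta)^{k}\Bigl[O(q^{k+1}\Delta k m)+k\,\tau(m-k)+O(k(m-k)^2)\Bigr].
\]

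The crux of the analysis is the recursive term $\sum_{k}k(e\Delta)^k\tau(m-k)$, which by the inductive hypothesis is bounded by $C((q+1)e\Delta)^m m^3\sum_{k\ge 1}k(q+1)^{-k}$. Since $\sum_{k\ge 1}k(q+1)^{-k}=(q+1)/q^2\le 3/4$ for $q\ge 2$, this contribution is at most $\tfrac{3}{4}C((q+1)e\Delta)^m m^3$, leaving room for the non-recursive terms to be absorbed once $C$ is large enough (depending on $q$ and $\Delta$). The remaining terms, involving $(qe\Delta)^k$ factors, are smaller than $((q+1)e\Delta)^m$ by a factor of $(q/(q+1))^m$ times polynomial overhead, so they fit comfortably.

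The main conceptual obstacle is isolating the correct form of the recursion for $1/R_{S,v}$, in particular verifying that the forest expansion of Lemma~\ref{lem:forest graph hom} interacts correctly with the broken-edge factors in $w_T$: the weight $w_T$ already incorporates the contribution of edges of $G[V(T)]$ that are not in $T$, so once $V(T)$ is removed the remaining forest really does factor against $H(G[S\setminus V(T)];x)$. The main technical obstacle is the bookkeeping in the recurrence, which is why the geometric-series step depends on having $q+1$ (rather than $q$) in the base: this is exactly what makes the self-referential bound close.
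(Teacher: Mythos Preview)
Your proposal is correct and follows essentially the same approach as the paper: the same recursion for $1/R_{S,v}$ derived from the forest expansion of Lemma~\ref{lem:forest graph hom}, the same telescoping of the inner ratio, the same inductive time bound with the key observation that $\sum_{k\ge 1}k(q+1)^{-k}\le 3/4$ closes the recurrence. One small slip: you write ``BCF-free trees'' but in this setting the sum ranges over \emph{all} subtrees (the broken-edge contribution is absorbed into $w_T$), which you correctly acknowledge in your final paragraph and which is indeed what Proposition~\ref{prop:enumerate tree} enumerates.
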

\begin{proof}
    The proof essentially follows the same line of proof as our proof of Lemma~\ref{lem:compute R chrom}. 

We first describe an algorithm for computing $R_{S,v}(x)$ for $S\subseteq V$.
For this we will need the following recursion for $R_{S,v}(x)$.
For an arbitrary ordering on the induced edges of $S$ we have
    \begin{align*}
        \frac{1}{R_{S,v}(x)}&=\frac{H(G[S];x)}{H(G[S-v];x)}=\sum_{F\in \mathcal{F}_{G[S]}}\prod_{T\in \mathcal{C}(F)}\frac{w_T}{H(G[S-v];x)}\\
        &=\sum_{\substack{F\in \mathcal{F}_{G[S]}\\ v\notin V(F)}}\prod_{T\in \mathcal{C}(F)}\frac{w_T}{H(G[S-v];x)}+\sum_{\substack{F\in \mathcal{F}_{G[S]}\\ v\in V(F)}}\prod_{T\in \mathcal{C}(F)}\frac{w_T}{H(G[S-v];x)}\\
        &=1+\sum_{\substack{F\in \mathcal{F}_{G[S]}\\ v\in V(F)}}\prod_{T\in \mathcal{C}(F)}\frac{w_T}{H(G[S-v];x)}\\
        &=1+\sum_{\substack{T:T\in\mathcal{T}_{G[S]}\\ v\in V(T),|T|\ge 1}}w_T\frac{H(G[S-V(T)],x)}{H(G[S-v];x)}
    \end{align*}
    This means that
    \[
        R_{S,v}^{[m]}(x)=\frac{1}{1+F^{[m]}(x)},
    \]
    where
    \[
        F^{[m]}(x)=\sum_{\substack{T:T\in\mathcal{T}_{G[S]}\\ v\in V(T),1\le |T|\le m}}w_T^{[m]}\left(\frac{H(G[S-V(T)];x)}{H(G[S-v];x)}\right)^{[m-|T|]}.
    \]
If we denote $V(T)=\{v,v_1,\dots,v_\ell\}$ for a $T$ appearing in this sum, then we can express
    \begin{equation}\label{eq:prod}
        \left(\frac{H(G[S-\{v,v_1,\dots,v_\ell\}];z)}{H(G[S];z)}\right)^{[m-|T|]}=\prod_{i=1}^\ell \left(R_{S\setminus\{v,v_1,\dots,v_{i-1}\},v_i }(z)\right)^{[m-|T|]}.
    \end{equation}

Now let us denote by $\tau(m)$ the time needed to calculate $R_{S,v}^{[m]}(z)$ for any $S\subseteq V(G)$ such that $|V\setminus S|\le m$ and $v\in S$.
It is our aim to show by induction that $\tau(m)\leq C(e(q+1)\Delta)^mm^3$.
 
   To compute $R_{S,v}^{[m]}$ from $F^{[m]}$ takes time $O(m^3)$ by Lemma~\ref{lemma:reciprocial}. 
    To compute $F^{[m]}(z)$ we need to enumerate all trees $T$ in $G[S]$ containing $v$ with at most $m$ edges and compute for each such $T$ the polynomials $w_T^{[m]}$ and $\left(\frac{H(G[S-\{v,v_1,\dots,v_\ell\}];z)}{H(G[S];z)}\right)^{[m-|T|]}$ and their product.
    The enumeration can be done in $(e\Delta)^mO(\Delta m)$ time by Proposition~\ref{prop:enumerate tree}.
   As remarked earlier, the computation of $w_T^{[m]}$ can be done in $O(q^k\Delta km)$ time for a tree with $k$ vertices. 
   By~\eqref{eq:prod}, $\left(\frac{H(G[S-\{v,v_1,\dots,v_\ell\}];z)}{H(G[S];z)}\right)^{[m-|T|]}$ can be computed as product of $\ell=|T|$ polynomials with $\ell$ calls to the algorithm, giving a running time $O(\ell(m-\ell)^2)+\tau(m-\ell)\ell$. As the number of trees containing a given vertex of $\ell$ edges is at most $(e\Delta)^\ell$ by~\cite{Borgsetal}*{Lemma 2.1}, we can bound $\tau(m)$ by

    \begin{align*}
    &O(m^3) + (e\Delta)^mO(\Delta m)+\sum_{k=1}^{m} (e\Delta)^k \left(O(q^{k+1}\Delta k(m-k))+ \tau(m-k) k +k O((m-k)^2)\right)\\
        \le &O(m^3)+(e\Delta)^mO(\Delta m)+((q+1)e\Delta)^m\left((qe\Delta)m+\tfrac{3}{4}Cm^3+m^2\tfrac{e\Delta}{(q+1)^m}\right)\\
        \le &O(m^3)+(e\Delta)^m O(\Delta m)+\tfrac{4}{5}C((q+1)e\Delta)^mm^3\\
        \le &C((q+1)e\Delta)^mm^3
    \end{align*}
    where the last two inequalities follow provided $C$ is sufficiently large and using the fact that $ \sum_{k\geq 1}kx^k=\tfrac{x}{(1-x)^2}$ and where for we use induction in the second inequality.
This proves the lemma.
\end{proof}

In exactly the same as in the proof of Theorem~\ref{thm:chrom} we can now deduces Theorem~\ref{thm: graph hom}. We leave the details to the reader.

\subsection{Proof of Lemma~\ref{lem:forest graph hom}}
Here we provide a proof of Lemma~\ref{lem:forest graph hom}, based on~\cite{bencs2024approximatingvolumetruncatedrelaxation}*{Section 2}.

We require the following lemma of Penrose~\cite{penrose}.
\begin{lemma}
Let $G=(V,E)$ be a graph with a fixed total ordering of the edges. Then the collection of connected subgraphs of $G$ can be partitioned into intervals of the form $\{F\mid T\subseteq F\subseteq T\cup B(T)\}$, where T is a tree.
\end{lemma}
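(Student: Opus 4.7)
The plan is to exhibit an explicit canonical map $\Phi$ from connected edge-subgraphs of $G$ to subtrees of $G$ whose fibers are precisely the alleged intervals $\{F : T \subseteq F \subseteq T \cup B(T)\}$. I would take $\Phi(F)$ to be the unique minimum spanning tree of $F$ with respect to the given total order $<$ on the edges, produced for instance by running Kruskal's algorithm on the edges of $F$ in increasing order. Then $T := \Phi(F)$ is automatically a spanning tree of $F$ on the same vertex set $V(T) = V(F)$.

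The first step is to verify that $T \subseteq F \subseteq T \cup B(T)$. This will follow from the cycle property of the MST: for every $f \in F \setminus T$, the graph $T \cup \{f\}$ contains a unique cycle $C_f$, and $f$ must be the $<$-maximum element of $C_f$, for otherwise a larger edge of $T \cap C_f$ could be swapped for $f$ to obtain a spanning tree of $F$ which is lighter than $T$ in the natural lexicographic comparison. By definition this means $f \in B(T)$.

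The second, slightly more subtle step is to show that every $F$ with $T \subseteq F \subseteq T \cup B(T)$ is connected and satisfies $\Phi(F) = T$. Connectedness is easy: every broken edge has both endpoints in $V(T)$ (otherwise adjoining it to $T$ would not produce a cycle), so $V(F) = V(T)$, and $F$ is connected since it contains the spanning tree $T$ of its vertex set. For the identity $\Phi(F) = T$ I would invoke the cycle-property \emph{characterization} of the MST: a spanning subtree $T' \subseteq F$ equals the MST of $F$ iff every edge of $F \setminus T'$ is $<$-maximum in the fundamental cycle it creates with $T'$. For $T' = T$ this condition holds verbatim from $F \setminus T \subseteq B(T)$, and uniqueness of the MST gives $\Phi(F) = T$.

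Combining the two directions, the non-empty fibers of $\Phi$ are exactly the intervals $[T, T \cup B(T)]$, and since $\Phi(T) = T$ for every subtree $T$ of $G$, every tree indexes a non-empty block. This yields the desired partition of the connected subgraphs of $G$. The only non-routine ingredient is the cycle-property characterization of the MST used in the reverse direction; I expect it to be the main, though quite classical, technical point, with the remaining arguments reducing to bookkeeping about how broken edges interact with the canonical MST map.
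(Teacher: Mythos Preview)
Your proposal is correct and follows exactly the approach the paper indicates: the paper does not give a full proof but simply remarks that the lemma ``can be proved by assigning a connected subgraph to its minimum weight spanning tree,'' which is precisely your map $\Phi$. Your write-up fills in the details (the cycle property in both directions and the resulting fiber description) that the paper leaves implicit.
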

As remarked in~\cite{bencs2024approximatingvolumetruncatedrelaxation} this lemma can be proved by assigning a connected subgraph to its minimum weight spanning tree.

We can now prove Lemma~\ref{lem:forest graph hom}.
\begin{proof}[Proof of Lemma~\ref{lem:forest graph hom}]
By expanding the product we can write 
\begin{align*}
H(G;x)&=q^{-|V|}\hom(G,J+x(A-J))=q^{-|V|}\sum_{\phi:V\to [q]}\prod_{uv\in E}(J+x(A-J))_{\phi(u),\phi(v)}
\\
&=\sum_{\phi:V\to [q]}q^{-|V|}\sum_{F\subseteq E}\prod_{uv\in F}(x(A-J))_{\phi(u),\phi(v)}
\\
&=\sum_{F\subseteq E}q^{-|V(F)|}\sum_{\phi:V(F)\to [q]}\prod_{uv\in F}(x(A-J))_{\phi(u),\phi(v)},
\end{align*}
where $V(F)$ denotes the collection of vertices incident with some edge of $F$.

Clearly, for a fixed set $F\subseteq E$ its contribution to the sum is multiplicative over its components. 
By the previous lemma we can uniquely realize each connected subgraph $C$ as $C=T\cup B$ with $B\subseteq B(T)$, the broken edges of $T$.
This implies that we can write 
\begin{align*}
H(G;x)&=\sum_{F\in \mathcal{F}_G}\prod_{T\in \mathcal{C}(F)}q^{-(|T|+1})\sum_{\phi:V(T)\to [q]}\sum_{B\subseteq E(T)}\prod_{uv\in T\cup B}(x(A-J))_{\phi(u),\phi(v)}
\\
&=\sum_{F\in \mathcal{F}_G}\prod_{T\in \mathcal{C}(F)}\frac{x^{|T|}}{q^{|T|+1}}\sum_{\phi:V(T)\to [q]}\prod_{uv\in T}(A-J)_{\phi(u),\phi(v)}\cdot\prod_{uv\in B(T)}(J+x(A-J))_{\phi(u),\phi(v)}
\\
&=\sum_{F\in \mathcal{F}_G}\prod_{T\in \mathcal{C}(F)}w_T,
\end{align*}
as desired.
\end{proof}

\bibliographystyle{numeric}
\bibliography{algorithmic}
\end{document}